\documentclass[11pt]{article}

\usepackage{booktabs}
\usepackage{amsmath, amssymb, amsthm, thmtools, amsfonts}
\usepackage{bbm}
\usepackage[utf8]{inputenc}
\usepackage{hyperref}

\usepackage[margin=1in]{geometry}

\usepackage{ifthen}
\usepackage{tikz}
\usetikzlibrary{positioning,decorations.pathreplacing}
\usepackage{appendix}
\usepackage{graphicx}
\usepackage{adjustbox}
\usepackage{array}
\usepackage{makecell}
\usepackage{pifont}
\usepackage{amsmath}

\usepackage{color}
\usepackage{pagecolor}
\usepackage[boxed]{algorithm2e}
\usepackage[noend]{algpseudocode}
\usepackage{epstopdf}
\usepackage[textsize=tiny]{todonotes}

\newcommand\eps{\varepsilon}

\usepackage{framed}
\usepackage[framemethod=tikz]{mdframed}
\usepackage[bottom]{footmisc}
\usepackage[shortlabels]{enumitem}
\setitemize{noitemsep,topsep=3pt,parsep=3pt,partopsep=3pt}
\usepackage[font=small]{caption}
\usepackage{xspace}

\newtheorem{theorem}{Theorem}[section]
\newtheorem{lemma}[theorem]{Lemma}
\newtheorem{fact}[theorem]{Fact}
\newtheorem{meta-theorem}[theorem]{Meta-Theorem}

\newtheorem{Remark}[theorem]{Remark}

\newtheorem{proposition}[theorem]{Proposition}

\definecolor{darkgreen}{rgb}{0,0.5,0}
\definecolor{darkred}{rgb}{0.9,0,0}
\definecolor{grayish}{rgb}{0.8,0.8,0.8}
\usepackage{hyperref}
\hypersetup{
    unicode=false,          
    colorlinks=true,        
    linkcolor=darkred,          
    citecolor=darkgreen,        
    filecolor=magenta,      
    urlcolor=cyan           
}
\usepackage[capitalize, nameinlink]{cleveref}
\Crefname{table}{Table}{Tables}
\Crefname{Remark}{Remark}{Remarks}

\algnewcommand\algorithmicswitch{\textbf{switch}}
\algnewcommand\algorithmiccase{\textbf{case}}

\algdef{SE}[SWITCH]{Switch}{EndSwitch}[1]{\algorithmicswitch\ #1\ \algorithmicdo}{\algorithmicend\ \algorithmicswitch}%
\algdef{SE}[CASE]{Case}{EndCase}[1]{\algorithmiccase\ #1}{\algorithmicend\ \algorithmiccase}%
\algtext*{EndSwitch}%
\algtext*{EndCase}%

\newcommand{\bigO}{\mathcal{O}}

\renewcommand{\paragraph}[1]{\vspace{0.15cm}\noindent {\bf #1}.}

\mathchardef\mhyphen="2D

\newcommand{\MPC}{$\mathsf{MPC}$\xspace}

\newcommand{\cclique}{$\mathsf{Congested}$\xspace$\mathsf{Clique}$\xspace}

\newcommand{\poly}{\operatorname{\text{{\rm poly}}}}
\newcommand{\polylog}{\operatorname{\text{{\rm polylog}}}}

\newcommand{\expval}[1]{E\left[#1\right]}

\newcommand{\whp}{with high probability}

\newcommand{\set}[1]{\left\{#1\right\}}
\newcommand{\paren}[1]{\mathopen{}\left(#1\right)\mathclose{}}

\renewcommand{\paragraph}[1]{\vspace{0.15cm}\noindent {\bf #1}.}

\newtheorem{Claim}{Claim}[section]
\setlength{\textfloatsep}{0.1cm}

\begin{document}

\title{\Large Dynamic Graph Algorithms with Batch Updates in the Massively Parallel Computation Model\thanks{Krzysztof Nowicki's research is supported by the Polish National Science Centre project no.\ 2017/25/B/ST6/02010 and by the Foundation for
Polish Science (FNP).}}
\author{Krzysztof Nowicki\thanks{University of Wrocław. Work partially done during an internship at the IBM T.J.\ Watson Research Center.}
\and Krzysztof Onak\thanks{IBM Research.}}

\date{}

\maketitle









\begin{abstract} \small\baselineskip=9pt%
We study dynamic graph algorithms in the Massively Parallel Computation model, which was inspired by practical data processing systems. Our goal is to provide algorithms that can efficiently handle large batches of edge insertions and deletions.

We show algorithms that require fewer rounds to update a solution to problems such as Minimum Spanning Forest, 2-Edge Connected Components, and Maximal Matching than would be required by their static counterparts to compute it from scratch. They work in the most restrictive memory regime, in which local memory per machine is strongly sublinear in the number of graph vertices. Improving on the size of the batch they can handle efficiently would improve on the round complexity of known static algorithms on sparse graphs.

Our algorithms can process batches of updates of size $\Theta(S)$, for Minimum Spanning Forest and 2-Edge Connected Components, and $\Theta(S^{1-\varepsilon})$, for Maximal Matching, in $O(1)$ rounds, where $S$ is the local memory of a single machine.
\end{abstract}

\section{Introduction.}

\subsection{Computing in Parallel: the \MPC model.}

Due to growing amounts of data, processing them in a centralized manner, using a single commodity machine, has often become infeasible. Several approaches addressing this type of challenge have been developed, including multi-machine systems such as MapReduce, Hadoop, and Spark. 
%
%
%
Even though several features of these platforms are different, they all embrace synchronous data processing. This property was captured in the Massively Parallel Computation model, in short \MPC, which was proposed by Karloff et al.~\cite{DBLP:conf/soda/KarloffSV10}.

In a nutshell, in \MPC, $M$ machines perform computation in synchronous rounds. Each machine has local memory $S$, i.e., it can store $S$ words consisting of $\Omega(\log N)$ bits, where $N$ is the number of words of which the input consists. Initially, the input is evenly partitioned between the machines, i.e., each machine receives $\Theta(N / M)$ of them.

Each round of computation consists of the phase of local computation and the phase of communication. During the local computation phase, each machine performs computation on the data stored in its local memory. During the communication phase, each machine can exchange some number of $\bigO(\log N)$ bit messages with any other machine, as long as each machine is a source and destination of at most $\bigO(S)$ messages. The required representation of the output of the algorithm depends on its size. If the output is small, we may require that there is a specific machine that knows it (e.g., for the Graph Connectivity problem, the output can be encoded on a single bit). If the output is large, we allow that the output is distributed across several machines (i.e., for the Minimum Spanning Tree problem, we require that all the edges that belong to the minimum spanning tree are marked).

The most desired goal is to design algorithms that for $S \in \bigO(N^{1-\varepsilon})$ and $S\cdot M \in \bigO(N)$, require a small number of computation rounds. The number of rounds needed to solve a particular problem is called its \emph{round complexity}. The total memory bound is often relaxed. Namely, $S\cdot M \in \omega(N)$ is allowed as long as $S \cdot M$ is not significantly greater than $N$ (for instance, $S \cdot M = N \polylog N$).

\subsection{Evolving Data Sets: Dynamic Algorithms.}
In this paper, we consider \emph{Dynamic Graph Algorithms}~\cite{eppstein1998dynamic}: the data set is an evolving graph and the goal is to maintain a solution to a specific graph problem. Examples of problems considered in the framework include Minimum Spanning Tree \cite{DBLP:journals/jacm/HolmLT01, DBLP:conf/focs/NanongkaiSW17}, Maximal Independent Set \cite{DBLP:conf/soda/AssadiOSS19, DBLP:mm1_focs, DBLP:mm2_focs}, and Maximal / Maximum Matching \cite{DBLP:conf/icalp/ArarCCSW18, DBLP:journals/siamcomp/BaswanaGS18, DBLP:conf/soda/BernsteinFH19, DBLP:conf/icalp/CharikarS18, DBLP:conf/focs/Solomon16, DBLP:journals/talg/NeimanS16}. In this paper we focus on a variant of dynamic graph algorithms in which the change from the $i$-th to $(i+1)$-th graph is described by a batch of $k$ \emph{edge insert} and \emph{edge delete} operations \cite{DBLP:conf/parle/FerraginaL94,DBLP:conf/spaa/AcarABD19}. It may be the case that a solution for the $i$-th data set can be useful for computing a solution for the $(i+1)$-th data set. Hence, our goal is to give an algorithm that processes the update from the $i$-th to $(i+1)$-th data set faster than processing the $(i+1)$-th data set from scratch. 

\subsection{Graph Problems in \MPC and the Motivation for Dynamic Algorithms.} \label{sec:problems}
For graph problems in the \MPC model, rather than expressing the memory of a single machine as a function of the size of the input (i.e., number of edges), it is useful to express it as a function of the number of vertices. For an $n$--vertex $m$--edge graph, we usually consider three variants of the \MPC model, depending on the relation between the local memory of a single machine and the number of vertices in the graph: $S \in \bigO(n^{\alpha})$, for constant $\alpha \in(0,1)$, $S \in \tilde \Theta(n)$, and $S \in \bigO(n^{1+\alpha})$, for constant $\alpha > 0$. Usually, the goal is to provide efficient algorithms with the local memory as small as possible. The reason is that physical infrastructure simulates all machines of the \MPC model on some cluster, and the smaller local memory allows for distributing work between physical machines more evenly. Even though there is no provable separation between these three variants, the best known upper bounds in each of them are different for some graph problems.

In the remaining part of \cref{sec:problems}, we recall known static algorithms for the problems considered in this paper, i.e., the Minimum Spanning Tree problem, 2-Edge Connected Components problem, and Maximal Matching problem. Then we explain the motivation for studying dynamic versions of MPC algorithms for them.

\paragraph{Results related to the Minimum Spanning Forest problem in \MPC}
For the sublinear memory regime, it is known that one can solve the Minimum Spanning Forest problem in $\bigO(\log n)$ rounds, using slightly modified variants of Boruvka's algorithm~\cite{Boruvka}, but there are no known $o(\log n)$ algorithms. Moreover, it is conjectured that if local memory is bounded by $\bigO(n^{\alpha})$ for some constant $\alpha < 1$ and the number of machines is at most polynomial, then deciding whether the input graph is a single cycle or two disjoint cycles---a task that can be solved with a minimum spanning forest algorithm---requires $\Omega(\log n)$ rounds of computation. Throughout the paper, we refer to this problem as the \emph{2-cycle problem} and to the conjecture as the \emph{2-cycle conjecture}.

For the MPC with larger local memory, i.e., $S = \tilde\bigO(n)$ or $S = \Theta(n^{1+\alpha})$, the algorithms were developed independently by two communities: they were phrased either as MPC algorithms \cite{DBLP:conf/spaa/LattanziMSV11,affinity_clustering} or \cclique algorithms  \cite{lotker2005CongestK,Hegeman:2015:TOB:2767386.2767434,GhaffariMSTLogStar,DBLP:conf/soda/Jurdzinski018,Nowicki19} (which also can be deployed in the MPC model \cite{10.1007/978-3-319-09620-9_13}).

Lattanzi et al.~\cite{DBLP:conf/spaa/LattanziMSV11} show how to solve the Minimum Spanning Forest problem in $\bigO(1 / \epsilon)$ rounds with $\bigO(n^{1+\epsilon})$ local memory. Bateni et al.~\cite{affinity_clustering} improve the complexity to $\bigO(\log (1/\eps)+1)$. For $S \in \Theta(n^{1+\alpha})$ both algorithms need only a constant number of rounds to compute a minimum spanning forest. Neither of these papers analyses what the round complexity of their algorithms when deployed on the variant of MPC with $S \in \bigO(n)$. It seems, however, that the algorithm by Lattanzi et al.~\cite{DBLP:conf/spaa/LattanziMSV11} requires $\bigO(\log n)$ rounds, and the algorithm by Bateni et al.~\cite{affinity_clustering} requires $\bigO(\log \log n)$ rounds.

The first algorithm for the Minimum Spanning Forest problem in \cclique is a deterministic $\bigO(\log \log n)$ round algorithm proposed by Lotker et al.~\cite{lotker2005CongestK}. When applied in the MPC model, its round complexity matches that of the algorithm by Bateni et al.~\cite{affinity_clustering}. 

The first $o(\log\log n)$ round algorithm for MPC with $S \in \bigO(n)$ was proposed by Hegeman et al.~\cite{Hegeman:2015:TOB:2767386.2767434}. The authors show that a random sampling approach proposed by Karger, Klein, and Tarjan~\cite{Karger:1995:RLA:201019.201022}, combined with a connectivity algorithm from the streaming model~\cite{ahn2012analyzing} is sufficient to give an algorithm that needs only $\tilde\bigO(n)$ local memory to solve Minimum Spanning Forest in $\bigO(1)$ rounds (or $\bigO(\log\log\log n)$ rounds with $\bigO(n)$ local memory limit). Further improvements (for $S \in \bigO(n)$) were achieved by providing better connectivity algorithms that required fewer rounds with $\bigO(n)$ local memory---firstly, Ghaffari and Parter presented an $\bigO(\log^* n)$ round algorithm~\cite{GhaffariMSTLogStar}, which was extended to an $\bigO(1)$ round algorithm by Jurdziński and Nowicki~\cite{DBLP:conf/soda/Jurdzinski018}.

\paragraph{Results related to  the 2-Edge Connected Components problem}
In the MPC model with linear local memory, one can identify the 2-Edge Connected Components of the input graph by computing a 2-sparse connectivity certificate \cite{IbarakiNagamochi1992}. To compute this certificate, it is enough to solve the Spanning Forest problem twice. Therefore, the 2-Edge Connected Components can be computed in the MPC model with the linear local memory in $\bigO(1)$ rounds \cite{DBLP:conf/soda/Jurdzinski018,Nowicki19}.

For the MPC model with sublinear local memory, one can simulate known PRAM algorithms \cite{DBLP:journals/siamcomp/TarjanV85} in $\bigO(\log n)$ rounds. Furthermore, in \cite{DBLP:conf/icalp/AndoniSZ19}, the authors show that it is possible to find the bi-connected components in $\bigO(\log D \log^2\log_{m/n}n + \log D' \log\log_{m/n}n )$ rounds, where $D$ and $D'$ are the diameter and bi-diameter of the input graph, respectively.

\paragraph{Results related to the Maximal Matching problem}
The first static algorithm for the Maximal Matching problem required $\bigO(n^{1+\epsilon})$ local memory and $\bigO(1 / \varepsilon)$ rounds to finish computation \cite{DBLP:conf/spaa/LattanziMSV11}. For the linear memory regime, some algorithms for $(1+\eps)$-approximation of Maximum Matching were known since 2017~\cite{czumaj2017round,AssadiBBMS19,DBLP:conf/podc/GhaffariGKMR18}, but a $\poly(\log \log n)$-round algorithm for the Maximal Matching was not known until the recent breakthrough of Behnezhad et al.~\cite{maximal_matching_focs}. For the sublinear memory regime, the first sublogarithmic algorithm for the maximal matching was proposed in \cite{DBLP:conf/soda/GhaffariU19,DBLP:journals/corr/abs-1807-08745}.

\paragraph{The motivation for algorithms on dynamic data sets}
\cref{t:compl} summarizes the round complexity of algorithms for the considered problems in MPC with $S \in \bigO(n^{\alpha})$ and $S \in \bigO(n)$.

\begin{table}
\begin{center}
\begin{tabular}{|l|c|c|}\hline
Problem & $\bigO(n^{\alpha})$ local memory & $\bigO(n)$ local memory \\\hline
MSF & \makecell[c]{$\bigO(\log n)$} & \makecell[c]{$\bigO(1)$} \\\hline
2ECC & \makecell[c]{$\bigO(\log n)$} & \makecell[c]{$\bigO(1)$} \\\hline
MM & \makecell[c]{$\tilde\bigO(\sqrt{\log n})$} & \makecell[c]{$\bigO(\log \log n)$} \\\hline

\end{tabular}
\end{center}
\caption{Round complexity of static algorithms\label{t:compl} for Minimum Spanning Forest (MSF), 2-Edge Connected Components (2ECC), and Maximal Matching (MM).}
\end{table}%

As we can see, there is a substantial gap between the round complexities of best known algorithms in those two variants of the MPC model. Furthermore, it may be the case that the known algorithms for MPC with $S \in \bigO(n^{\alpha})$ are optimal---this is in fact widely believed for Minimum Spanning Forest due to the 2-cycle conjecture---and obtaining better bounds is impossible. Hence, if we are limited to static algorithms, we have to choose between the small local memory and better round complexity. 

However, as we show in this paper, considering dynamic variants of these problems allows to obtain algorithms that can efficiently maintain a solution for large datasets in MPC with $S \in \bigO(n^{\alpha})$. Their round complexity is provably lower than that of the corresponding static algorithms. Hence, as long as we can model a dataset we need to process as a dynamic data set undergoing batch updates, we can achieve both small local memory and low round complexity.

\subsection{Dynamic Graph Algorithms in \MPC.}\label{sec:batch_dynamic_mpc}
Italiano et al.~\cite{DBLP:conf/spaa/ItalianoLMP19} initiated the study of dynamic variants of the \MPC model.
They give an algorithm that after one edge update to the input graph can in $\bigO(1)$ rounds of computation maintain solutions to problems such as Connectivity, approximate MST, and Maximal Matching. Their focus is on minimizing the communication and number of machines that are used by the protocol, but they do not address the following issue: if the number of changes to the data set is $\Theta(\log n)$, in order to process them, we again need $\Theta(\log n)$ rounds.

Durfee et al.~\cite{batch_dynamic_soda} address this issue and consider a model in which the updates come in batches, i.e., there is a set of updates that should all be applied at the same time. They propose an algorithm that maintains a solution to Connectivity and can process a batch of size $\Theta(S^{1-\varepsilon})$ in $\bigO(1)$ rounds.

\begin{Remark}\rm
In both papers~\cite{DBLP:conf/spaa/ItalianoLMP19, batch_dynamic_soda}, the authors also try to minimize the overall number of messages that are sent in each round. In this paper, our goal is slightly different. We focus on maximizing the size of the batch that we can process, while minimizing the number of processing rounds. As for the total communication complexity, we allow $\Theta(m)$ global communication during each round, which is usual in the static variant of the \MPC model.
\end{Remark}


\subsection{Our Results.}\label{c_mpc}
In this paper, we further explore the direction suggested by Durfee et al. First, we observe that there is an upper bound on the size of a batch for which we can hope that dynamic algorithms outperform static algorithms. Moreover, we show algorithms maintaining the Minimum Spanning Forest, 2 Edge Connected Components and Maximal Matching of a graph undergoing batch updates. Their round complexity matches the round complexity of static algorithms for MPC with $S \in \bigO(n)$. It seems that any further improvements are either not possible or require progress in developing the algorithms for the static variant of the \MPC model.

\paragraph{The Minimum Spanning Forest problem in the Batch Dynamic \MPC model}
In this problem, we maintain a minimum spanning forest of a dynamically changing graph. Before each batch update, we have a current graph $G$ with a minimum spanning forest $F$. Then, given a batch of updates that change $G$ into a graph $G'$, we have to compute a sequence of updates, such that if we apply them to $F$, then we obtain $F'$ that is a minimum spanning forest of $G'$.

In fact, we show an algorithm that solves a slightly more demanding variant of the problem, defined as follows. Let
\begin{itemize}
\item $G$ be a graph before applying the batch of updates,
\item $U = (u_1, u_2, \dots, u_k)$ be a sequence of updates, i.e., sequence of edge insertions and edge deletions,
\item $F$ be a minimum spanning forest of $G$,
\item $F_x$ be a minimum spanning forest of $G$ after applying the first $x$ updates $(u_1, u_2, \dots, u_x)$.
\end{itemize}
The goal is to return a sequence of updates $U'$ to the minimum spanning forest such that in order to obtain $F_x$, it is enough to execute some prefix of $U'$ on $F$. More precisely, as a result of processing a batch of updates, the algorithm has to return a sequence of updates $U' = (u'_1, u'_2, \dots, u'_{k'})$ and a sequence of indices $(y_1, y_2, \dots, y_k)$ such that $F_x$ is $F$ to which we apply the prefix of returned updates of length $y_x$, i.e., the sequence of updates $(u'_1, u'_2, \dots, u'_{y_x})$.

Our algorithm can process a single batch of updates in $\bigO(\frac{1}{\alpha})$ rounds as long as the size of the batch is $k \in \bigO(n^{\alpha})$. 

In order to give an algorithm for the Batch Dynamic Minimum Spanning Forest problem, we propose a generalization of the \emph{top tree} data structure~\cite{DBLP:journals/talg/AlstrupHLT05}, suited for batch updates in the \MPC model, and a generalization of the standard cycle property of minimum spanning trees~\cite{Tarjan:1983:DSN:3485}. 
Combined with a black box application of an $\bigO(1)$ round algorithm for \MPC model with $\bigO(n)$ memory per machine~\cite{DBLP:conf/soda/Jurdzinski018,Nowicki19}, these tools give an $\bigO(1)$-round algorithm that can process a batch of updates of size $\bigO(S)$.

\paragraph{The 2-Edge Connected Components problem in the Batch Dynamic \MPC model}
In this problem, we have to maintain a 2-Edge Connected Components of a dynamically changing graph undergoing batch updates. There are two ways of formalizing the required output of the dynamic algorithm. In the first, we are required to maintain the set of bridges in the graph, where a \emph{bridge} is an edge that if removed, breaks a connected component into two. More specifically, for a set of updates to be processed, we require that the algorithm outputs a set of updates to the set of bridges. In the second, we  require that each vertex of the graph is assigned a label such that two vertices are assigned the same label if and only if they belong to the same 2-edge connected component.

Our algorithm can solve both variants and is based on the dynamic spanning forest algorithm (which we can maintain via our dynamic MST algorithm), the same top tree data structure that we use for the MST problem, and a rather well known sketching approach to the problem of finding bridges in the graph.

\paragraph{The Maximal Matching problem in the Batch Dynamic \MPC model}
In this problem, we have to maintain a Maximal Matching of a dynamically changing graph. Before each batch update, we have a current graph $G$ with a Maximal Matching $M$. Then, given a new graph $G'$, we have to compute a sequence of updates to $M$ such that applying this sequence results in a matching $M'$ that is maximal in $G'$.

In order to give an algorithm for the Batch Dynamic Maximal Matching problem, we make a few observations. Firstly, we observe that one can reduce the problem of processing a batch of updates to an instance of the Maximal Matching problem on a graph that has a small vertex cover. Then we observe that the algorithm for \MPC with $\bigO(n)$ memory per machine \cite{maximal_matching_focs} can solve this kind of instance, using local memory proportional to the size of this vertex cover. Hence, we can process a batch of updates of size $\bigO(S^{1-\delta})$ in $\bigO(\log 1 / \delta)$ rounds.

\subsection{Comparison with Other Results.}
Two independently developed works due to Anderson et al.~\cite{anderson2020workefficient} and Gilbert and Li \cite{gilbert2020fast} also concern the batch dynamic variant of the MST problem that we address in \cref{sec:dynamic_msf}.

  Even though the paper by Anderson et al.~considers only the \emph{incremental} version of the batch model of MST and aims for work efficient PRAM implementation of the algorithm, the underlying graph property seems to be akin to one used for the algorithm we develop in \cref{sec:dynamic_msf} and the algorithm developed by Gilbert and Li. 

  More precisely, for a set $S$ of $k$ edges to be inserted, all these algorithms identify a set of $\bigO(k)$ edges $R$ that if removed, create a partition into $\bigO(k)$ components $C$ such that computing MST on the graph where the set of vertices are components from $C$ and the edges are edges of $S$ and $R$ finds all replacement edges for the maintained MST. As for the fully dynamic version, i.e., supporting also edge removal operation, we---as well as Gilbert and Li---observe that it creates a set of $\bigO(k)$ relevant connected components, for which we can use an MST algorithm with a linear limit on the memory of a single machine~\cite{DBLP:conf/soda/Jurdzinski018, Nowicki19}. 

\section{Maximum Batch Size.}
As mentioned in \cref{r:batch_size}, our goal is to maximize the batch size and minimize the round complexity of an algorithm, under the assumption that the local memory of a single machine is $S \in \bigO(n^{\alpha})$, for some constant $\alpha < 1$. In this section, we present some evidence suggesting that obtaining the algorithms with better relation between the batch size and the round complexity may be difficult. More precisely, the existence of better batch dynamic algorithms for the Minimum Spanning Forest problem and 2 Edge Connected Components problem goes against the 2 cycle conjecture, and developing better algorithms for batch dynamic Maximal Matching requires developing better static algorithms for a fairly dense graphs. Below, we give a few more details, assuming that the batch size is $k \in \Omega(S^{1+\eps})$, for some constant $\eps > 0$\footnote{The arguments can be also applied for any $k \in \omega(S)$, it is just more convenient to discuss $k \in \Omega(S^{1+\eps})$.}. 

The basic observation is that a sequence of updates can make the input graph empty, and then a single batch of updates can create any $k$ edge graph. Since we assume that $k \in \Omega(S^{1+\eps})$, we can create a $S^{1+\eps/2}$ vertex graph of average degree $\bigO(S^{\eps/2})$. 

To obtain a conditional lower bound for the round complexity of an update for the Minimum Spanning Forest problem and the 2 Edge Connected Components problem, we observe that we can encode any instance of a 2 cycle problem of size $S^{1+\eps/2}$ in a single batch of updates. Hence, we need to solve the 2 cycle problem on a $S^{1+\eps/2}$-vertex graph, using machines with local memory $S$. Since 2 cycle problem can be resolved either by a minimum spanning forest algorithm or 2 edge connectivity algorithm, assuming the 2 cycle conjecture both algorithms require $\Omega(\log S)$ rounds to process a single batch of updates.

For the Maximal Matching problem, the situation is slightly more complex. The only clear implication is that maintaining the maximal matching  under batch updates of size $k \in \Omega(S^{1+\eps})$ is as difficult as computing it for $S^{1+\eps/2}$-vertex graph, with average $S^{\eps/2}$, using machines with local memory $S$. For MPC with $S \in \bigO(n)$ we know that any instance of the maximal matching can be reduced to such fairly dense instance in $\bigO(1)$ rounds \cite{maximal_matching_focs}, however for MPC with $S \in \bigO(n^{\alpha})$, we do not know such a reduction. Hence, even though it seems unlikely, we cannot exclude the possibility of proposing a batch dynamic algorithm that can process a batch of updates of size $k \in \Omega(S^{1+\eps})$ with round complexity better than the round complexity of the static algorithm in MPC with local memory $S \in \bigO(n^{\alpha})$.

\section{Top trees for the \MPC model.}\label{sec:data_structure}
In this section we show a data structure that allows us to maintain a dynamically changing forest in the \MPC model. The basic variant of the data structure allows to process edge insertion and edge deletion updates, in batches of size $k \in \bigO(n^{\alpha})$, in $\bigO(\frac{1}{\alpha})$ rounds. The internal structure of our data structure allows for efficient computation of various queries to the tree, some of them are used in our Batch Dynamic MSF algorithm and 2-Edge Connectivity algorithm.

On the top level, the data structure is a $\Theta(n^{\alpha/2})$-ary variant of \emph{top trees} \cite{DBLP:journals/talg/AlstrupHLT05}, that supports executing updates in batches. As in the paper that introduced top trees, whenever we want to address the input graph we say \emph{underlying tree}, and when we address the tree that is the data structure, we use term \emph{top tree}. 

The top tree data structure is a balanced binary tree, such that each node of this tree corresponds to a cluster, i.e., a connected part of the underlying tree. There is no single fixed implementation of the top tree data structure. Here, rather than adapting an existing implementation to the \MPC model, we propose an implementation that resembles B-trees \cite{journals/csur/Comer79}. More precisely, we guarantee that the depth of the data structure is low by imposing lower and upper bounds on the size (expressed as the number of substructures that are part of a single node) of the internal nodes and by having all basic nodes (i.e. corresponding to the edges of the underlying tree) of the structure on the same depth. Still, since we have that a single node of our data structure corresponds to a cluster (that is to a connected subgraph) of the input tree, we think about this data structure as a variant of top trees.

For our purpose we use $\Theta(n^{\alpha/2})$-ary trees. Such a choice guarantees that we can use various algorithms that traverse the data structure in either bottom-up or top-down fashion in $\bigO(1 / \alpha)$ rounds. Furthermore, it allows us to gather a node of a top tree, its all children and grand children in the memory of a single machine, which is useful when we need to reorganize the internal structure to restore the size invariants of internal nodes. Finally, our \MPC implementation allows to a apply a batch of updates to the underlying tree, in parallel. 

\paragraph{Description}
Our variant of the top tree could be summarized as follows: 
\begin{itemize}
    \item each node of the top tree represents a cluster (a connected set of edges) of the underlying tree;
    \item each edge of a graph is a top tree node of rank $0$, each node of the top tree of rank $r > 0$, that does not represent a whole underlying tree, contains $\Theta(n^{\alpha/2})$ nodes of rank $r-1$;
    \item two nodes of a top tree are sibling nodes if they are children of the same top tree node and share a single vertex of the underlying graph, we call such vertex a boundary vertex;
    \item each node of the top tree remembers its own boundary vertices as well as all boundary vertices of its children;
    \item each node of the top tree has a single vertex called the root of the node;
    \item the root of a top tree node that does not represent whole underlying tree is its border vertex on the path to the root of its top tree parent node.
\end{itemize}
Additionally, each vertex of the underlying graph maintains a set of top-tree nodes that contain this vertex, single top tree node per rank. We call this set a \emph{reference set}. If a top tree node stores a vertex of underlying graph, it stores also its reference set. 

\paragraph{Memory requirements for storing a single top-tree node}
The total size of the description of a top tree node is proportional to the number of its children and the number of all boundary vertices (and their reference sets) stored by this node.

\begin{lemma}\label{lem:boundary_vertices}
The number of boundary vertices of children (which we also call \emph{sub-nodes}) of a single top tree node is $\bigO(n^{\alpha/2})$.
\end{lemma}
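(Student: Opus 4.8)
The plan is to prove the bound by a single double-counting argument over all children at once; bounding the boundary vertices of each child separately is hopeless, since one child can in principle be a sibling of every other and thus carry $\Theta(n^{\alpha/2})$ boundary vertices, which would only yield the useless estimate $\bigO(n^{\alpha})$. Write $C_1,\dots,C_t$ for the children of the fixed top tree node, so that $t=\Theta(n^{\alpha/2})$, and let $P$ be the cluster represented by the node. I will use two structural facts about the construction: the edge sets $E(C_1),\dots,E(C_t)$ partition $E(P)$, and each $C_i$, as well as $P$ itself, is a connected subgraph of the underlying tree and hence a tree; in particular $\card{V(C_i)}=\card{E(C_i)}+1$ and $\card{V(P)}=\card{E(P)}+1$. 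For $v\in V(P)$ put $d(v)=\card{\set{i : v\in V(C_i)}}$; note $d(v)\ge 1$ since $V(P)=\bigcup_i V(C_i)$. Because any two children, being edge-disjoint subtrees, share at most one vertex, a vertex $v$ is a boundary vertex of some child precisely when $d(v)\ge 2$. Hence the number of distinct boundary vertices of children equals $\card{\set{v\in V(P) : d(v)\ge 2}}$, and the number counted with multiplicity over the children equals $\sum_{v:\,d(v)\ge 2} d(v)$; I will bound both by $\bigO(t)$.

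The key step is to count the vertex–child incidences two ways. On one hand $\sum_{v\in V(P)} d(v)=\sum_{i=1}^{t}\card{V(C_i)}=\sum_{i=1}^{t}\paren{\card{E(C_i)}+1}=\card{E(P)}+t$, using that the $E(C_i)$ partition $E(P)$; on the other hand $\sum_{v\in V(P)} 1=\card{V(P)}=\card{E(P)}+1$. Subtracting gives $\sum_{v\in V(P)}\paren{d(v)-1}=t-1$. Since the terms with $d(v)=1$ vanish and $d(v)\ge 1$ everywhere, this reads $\sum_{v:\,d(v)\ge 2}\paren{d(v)-1}=t-1$. Each summand here is at least $1$, so $\card{\set{v : d(v)\ge 2}}\le t-1$, and since $d(v)\le 2\paren{d(v)-1}$ whenever $d(v)\ge 2$ we also get $\sum_{v:\,d(v)\ge 2}d(v)\le 2(t-1)$. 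Both quantities are $\bigO(t)=\bigO(n^{\alpha/2})$, which is the claim.

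The argument itself is short; essentially all of its content sits in the two invariants quoted above (edge sets of the children partition $E(P)$, and the children and $P$ are connected subgraphs of the underlying tree), so the step I would be most careful to verify is that the construction in this section indeed maintains them, together with the degenerate case of a node that represents the whole underlying tree, which is excluded from the ``$\Theta(n^{\alpha/2})$ children'' invariant and must be handled on its own (it is immediate). As a sanity check one can re-derive the same bounds from the observation that the bipartite incidence graph between the $t$ children and the vertices with $d(v)\ge 2$ is a tree --- its acyclicity again forced by $P$ being acyclic and the $C_i$ edge-disjoint --- but the double count above is the cleanest route.
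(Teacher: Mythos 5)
Your proof is correct, but it takes a genuinely different route from the paper. The paper argues by peeling: it first shows (via a tour/cycle argument, using acyclicity of the underlying tree) that some sub-node has exactly one boundary vertex, removes it, observes that each such removal decreases the number of boundary vertices by at most one, and concludes that the number of boundary vertices is at most the number of sub-nodes. You instead prove it in one shot by double-counting vertex--child incidences: using that the children are edge-disjoint connected subtrees whose edge sets partition the parent cluster, and that trees satisfy $\card{V}=\card{E}+1$, you get the identity $\sum_{v}\paren{d(v)-1}=t-1$, hence at most $t-1$ distinct boundary vertices, and even $\le 2(t-1)$ counted with multiplicity over the children. Both arguments rest on the same structural invariants (edge-disjointness, connectivity of clusters, acyclicity of the underlying tree), which you rightly flag as the things to verify against the construction; your version is more self-contained and quantitatively a bit sharper than the paper's informal tour argument, while the paper's leaf-peeling viewpoint is closer in spirit to the way sub-nodes are manipulated in the later rebalancing proofs.
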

\begin{proof}
    Firstly, let us observe that there exists a sub-node with exactly one boundary vertex. If that would not be the case, it would be possible to find a tour, which visits each boundary node only once, and contains cycle.
    
    Assume that there are no sub-nodes with only one boundary vertex. Let us consider a tour that starts in some sub-node and cannot step on a boundary vertex twice. Whenever tour goes into a sub-node with at least two boundary vertices, stepping on a vertex $x$, it can leave this sub-node via some vertex $y \neq x$. This means that such tour either ends in the starting sub-node, or in a sub-node with at least 3 boundary vertices.
    
    The part of the tour that left from this sub-node, and came back without stepping twice on any boundary vertex is a cycle in the underlying graph, which is impossible, because the underlying graph is a tree. 
    
    Knowing that there exist at least one sub-node with one boundary vertex, we can remove from the graph all edges contained in this sub-node, and all non boundary vertices. For the boundary vertex, if it is not shared between two sub-nodes it is not a boundary vertex anymore. Then we can repeat this reasoning, since removing a single sub-node reduces the number of boundary vertices at most by $1$, and at the end there are no boundary vertices, the initial number of boundary vertices has to be no larger than the number of sub-nodes.
\end{proof}

By definition, each node stores only $\bigO(n^{\alpha/2})$ sub-nodes, their boundary vertices and own boundary vertices. By \cref{lem:boundary_vertices} the number of boundary vertices of all children is $\bigO(n^{\alpha/2})$, hence the total number of boundary vertices stored by a single node of the top tree is $\bigO(n^{\alpha/2})$. Since for each of those vertices the size of the reference set is $\bigO(\frac{1}{\alpha})$, the total space requirement for storing the whole top tree node is $\bigO(\frac{1}{\alpha} n^{\alpha/2})$.

\paragraph{Top Trees -- operations}
In the remaining part of this section, we give an implementation of the top tree operations. In the first part of this section, we propose algorithms that execute edge insertion and edge deletion operations. In \cref{subsec:rebalance} we show a protocol that given a top tree that is slightly out of balance transforms it into a top tree that meets the size constraints on all nodes. Then, in \cref{subsec:split_and_link}, we show a reduction from the problem of executing edge insert and edge delete operations to the problem of rebalancing an almost balanced top tree.

In \cref{subsec:queries}, which is the second part of this section, we consider a dynamically changing rooted tree $T$ with labels on each edge (or vertex, which we discuss later). We show that for such tree our data structure allows us to maintain a value of a distributive aggregate function defined on the labels of edges in a subtrees of $T$. 
We say that a function $f$ is distributive aggregative when there exist $f'$ and $g$ and $g'$ such that
\begin{itemize}
\item a value of $f'$ is defined (can be easily computed) for each set consisting of a single edge,
\item $f'(E_1 \cup E_2) = g(f'(E_1), f'(E_2))$, for any two disjoint sets of edges $E_1, E_2$, 
\item $f(S) = g'(f'(S))$.
\end{itemize}
More precisely, we show that given an edge labeled rooted tree $T$, together with its top tree data structure, we can compute the value of a distributive aggregative function in $\bigO(1 / \alpha)$ rounds, for all subtrees of $T$. Furthermore, we show that we can process a batch of path queries. A single path query is a pair of vertices $(u,v)$, and as a result we need to compute the value of a distributive aggregative function on all edges on the path between $u$ and $v$ in $T$.

Here, we formulate the result for the edge labeled rather than for the vertex labeled trees, as the edge is a basic building block of the top tree data structure, and we have that the clusters corresponding to the nodes of the top tree with the same rank are edge disjoint. Furthermore, a small modification of the algorithm for the function defined on edges can extend it to the functions defined on vertices. We show an example of that in \cref{sec:2ec_dmpc}, where we present an algorithm for 2-Edge Connectivity that uses labels defined for vertices.

\subsection{Edit operations -- rebalancing the tree.} \label{subsec:rebalance}
The top tree data structure allows us to modify the underlying forest by deleting and inserting the edges. Generally, in order to provide that the top tree has low depth, we want to preserve the size constraints on the nodes of the top tree. That is, we require that all nodes have size to between $n^{\alpha /2}$ and $c n^{\alpha/2}$. However, during the executions of the algorithms that modify the top tree, we allow the nodes to slightly violate the size constraints. More precisely, we only require that each node is smaller than $c' n^{\alpha/2}$, for some $c' > c$. We call the top tree meeting this relaxed size constraint \emph{almost-balanced} and the nodes larger than $c n^{\alpha/2}$ \emph{overloaded}, and smaller than $n^{\alpha/2}$ \emph{underloaded}.

The common part of edge insertion and edge deletion operations is a protocol that given an almost balanced top tree computes a balanced one.

\begin{lemma}\label{lem:rebalance}
  There is an algorithm that given a tree $G$ and an almost balanced top tree $T$ of $G$ in $\bigO(1 / \alpha)$ rounds of the MPC model computes a balanced top tree $T'$ of $G$.
\end{lemma}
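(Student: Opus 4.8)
The plan is to show that overloaded and underloaded nodes can be eliminated one rank at a time, working bottom-up, in $\bigO(1)$ rounds per rank, so that the whole process takes $\bigO(1/\alpha)$ rounds since the depth of the top tree is $\bigO(1/\alpha)$. At the beginning, only rank-$0$ nodes (edges) are trivially of the right size, and all higher-rank nodes satisfy the relaxed constraint, i.e.\ have fewer than $c' n^{\alpha/2}$ sub-nodes. I process ranks in increasing order; after processing rank $r$, I maintain the invariant that every node of rank $\le r$ has size between $n^{\alpha/2}$ and $c n^{\alpha/2}$, while every node of rank $> r$ still has size less than $c' n^{\alpha/2}$.

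\textbf{Fixing a single rank.} Fix a node $v$ of rank $r+1$, with children of rank $r$. By \cref{lem:boundary_vertices} the total number of boundary vertices of these children is $\bigO(n^{\alpha/2})$, and by the memory bound established after that lemma the node $v$ together with all its children and grandchildren fits in the memory $S = \Theta(n^\alpha)$ of a single machine (each such sub-structure needs $\bigO(\frac1\alpha n^{\alpha/2})$ space, and there are $\bigO(n^{\alpha/2} \cdot n^{\alpha/2}) = \bigO(n^\alpha)$ of them, which exceeds $S$ only by the $\frac1\alpha$ reference-set factor — I would instead load $v$ and its children only, together with just the boundary-vertex / root information of the grandchildren, which suffices to recompute adjacency among the rank-$r$ clusters). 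On this machine, the children of $v$ form a tree-of-clusters, and I re-partition this tree of $\bigO(n^{\alpha/2})$ rank-$r$ nodes into connected groups each of size between $n^{\alpha/2}$ and $c n^{\alpha/2}$: this is a routine greedy tree-partitioning (repeatedly peel off a minimal subtree of the cluster-tree whose size is in the desired window; since every leaf cluster has size $1$ and each step removes at most $c n^{\alpha/2}$ clusters, such a window-sized connected chunk always exists until fewer than $n^{\alpha/2}$ clusters remain, which are merged into a neighbouring group). Each resulting group becomes a new rank-$(r+1)$ node; I also recompute its boundary vertices, its root (the boundary vertex on the path toward the parent's root), and the reference sets, and I update the rank-$(r+1)$ entry in the reference set of every affected underlying vertex. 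Since a single $v$ fits on one machine, all nodes of rank $r+1$ can be re-partitioned in parallel in $\bigO(1)$ rounds.

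\textbf{Propagating the change upward.} The subtlety is that re-partitioning the children of $v$ may change the \emph{number} of rank-$(r+1)$ nodes that $v$'s parent (rank $r+2$) now contains, so the parent may become overloaded. However, the number of rank-$(r+1)$ nodes produced inside the region previously covered by one rank-$(r+1)$ node is $\bigO(1)$ (old size was $< c' n^{\alpha/2}$, each new node has size $\ge n^{\alpha/2}$, so at most $c'$ new nodes replace each old one); hence each rank-$(r+2)$ node, which had $< c' n^{\alpha/2}$ children, now has $\bigO(c' n^{\alpha/2})$ children — still a quantity that fits on one machine, so it remains ``almost balanced'' in a slightly weaker sense and the invariant ``rank $>r+1$ has $\bigO(n^{\alpha/2})$ children'' is preserved with a controlled constant. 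One must check that the blow-up constant does not compound across the $\bigO(1/\alpha)$ ranks; the point is that when I later process rank $r+2$, I re-partition it back into the window $[n^{\alpha/2}, c n^{\alpha/2}]$ \emph{before} moving to rank $r+3$, so the ``bloated'' state is only ever one rank deep and the constant $c'$ needed to absorb it can be chosen once, independent of $r$ and $\alpha$. After the last rank is processed, the root of the top tree may have too few children; if so I merge it with its single child (or leave it, depending on the convention), which does not affect sizes elsewhere. Summing $\bigO(1)$ rounds over $\bigO(1/\alpha)$ ranks gives the claimed $\bigO(1/\alpha)$ bound.

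\textbf{Main obstacle.} The main thing to get right is the bookkeeping of the size constants across ranks — precisely, proving that choosing $c$ and $c'$ once (with $c' > c$ large enough, e.g.\ $c' \ge 2c$ and $c$ larger than the constant hidden in \cref{lem:boundary_vertices}) suffices so that (i) after re-partitioning rank $r+1$ the parent never exceeds a fixed multiple of $n^{\alpha/2}$ children and thus still fits on one machine, and (ii) the greedy tree-partition always succeeds at extracting a connected chunk in the window $[n^{\alpha/2}, c n^{\alpha/2}]$. Everything else — fitting a node plus its children on one machine, recomputing boundary vertices and reference sets, updating the per-vertex reference sets — is a direct consequence of the memory accounting already done in the excerpt and the fact that each such local recomputation touches $\bigO(n^{\alpha/2})$ vertices and can be done in $\bigO(1)$ rounds.
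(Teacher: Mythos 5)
Your scheme only ever repairs \emph{overloaded} nodes, and this is where it breaks: an almost-balanced top tree (the actual input to \cref{lem:rebalance}, as produced by the split and link operations) is only guaranteed to satisfy the \emph{upper} bound $c'n^{\alpha/2}$ on node sizes; it may contain \emph{underloaded} nodes with fewer than $n^{\alpha/2}$ sub-nodes, and removing these is the main job of rebalancing. Your per-node step re-partitions the children of a single rank-$(r+1)$ node $v$ into connected groups of size in $[n^{\alpha/2},cn^{\alpha/2}]$, which is only possible when $v$ itself has at least $n^{\alpha/2}$ children; if $v$ is underloaded, no valid group exists inside $v$, and your greedy ``merge the leftover into a neighbouring group'' has nothing to merge into. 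Fixing an underloaded node inherently requires redistributing its sub-nodes among \emph{sibling} nodes of $v$, i.e.\ a regrouping performed from the perspective of $v$'s parent over all of its grandchildren (and one must argue this does not cascade or blow up sizes further up). This is exactly what the paper's first phase does, via the merge operation and the ``gather all grandchildren of a node on one machine and re-split them'' step, and your proposal contains no substitute for it. A second, related omission: after edit operations the children of a node need not all have rank exactly one less than the node (the paper's first phase explicitly allows sub-nodes of smaller rank and virtually merges them into higher-rank siblings); your invariant silently assumes homogeneous ranks, so the bottom-up induction does not even get started on the trees actually produced by split/link.

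The part of your argument that does work --- splitting nodes that became too large by a constant factor, with the observation that the blow-up at the parent is bounded because each child of size $<c'n^{\alpha/2}$ is replaced by at most $c'$ pieces of size $\ge n^{\alpha/2}$, and that the damage is repaired before moving one rank up --- is essentially the paper's clean-up phase (the paper pipelines these splits simultaneously on all levels instead of processing ranks sequentially, but both give $\bigO(1/\alpha)$ rounds). So the overload half of the lemma is fine in your write-up; the underload half, which is the substantive part, is missing.
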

\begin{proof}
In the remaining part of \cref{subsec:rebalance} we present an algorithm that restores the size invariant for the nodes of almost balanced the top tree. 

To restore the size invariants on the nodes, we process the top tree in two phases. In the first phase, we reorganize the top tree to provide that it does not contain nodes that are underloaded, but we possibly introduce some new top tree nodes that are overloaded. Then, in the second phase, we split and rearrange all overloaded nodes, so that in the resulting trees all internal nodes meet the size requirements. 

\paragraph{The first phase of rebalancing}
The first phase is executed in a bottom - top fashion. While processing nodes of rank $r$, we assume that each node contains sub-nodes that are proper instances of a top tree, but possibly with a rank that is smaller than $r-1$. Let $v$ be the node we process, and $r'$ be the highest rank among the sub nodes of $v$. To perform rebalancing, we merge all sub-nodes in such a way, that as a result we get a set of sub-nodes of rank $r'$, each having no underloaded internal nodes. To do so, rather than executing merges right away, we compute a set of \emph{merge} operations that have to be executed, and execute them later, simultaneously, on all levels of the tree. In order to compute the set of merges to be executed, we simulate the following process: while there exists a node $Z$ that has rank $r_1$ that has a sibling node $Y$ of rank $r_2 > r_1$, then virtually merge $Z$ into $Y$, (for the sake of this process replace nodes $Z$ and $Y$ by $Y$). 

At this point, we have a set of sub-nodes of $v$ of rank $r'$, but some of them could be underloaded. Since, as a result, we cannot have that $v$ has underloaded sub-nodes, we have to fix it. To do so, we can look at all sub-nodes and their children (including some sub-nodes of rank $r'-1$ that just got virtually merged into those nodes). The total number of grandchildren of $v$ is at most $\bigO(n^{\alpha})$, hence it fits into memory of a single machine. Then a machine that sees all grandchildren of $v$ and knows which of them are siblings can split them into the groups of size that meets our top tree constraints. If as a result we get only one node, the rank of $v$ becomes $r'$, otherwise it becomes $r'+1$.

\paragraph{The merge operation}
The merge operation is defined for a top tree $T$ of rank $r$ and a set of top trees $T_1, T_2, \dots, T_k$ of ranks $r_1, \dots, r_k$, and set of edges $e_1, e_2, \dots, e_k$  such that for all $i$ $r_i < r$, and $e_i$  is in $T_i$ and has one endpoint in $T$. Furthermore, we assume that $k \in \bigO(n^{\alpha/2})$, as the rebalancing algorithm never needs to merge more than  $\bigO(n^{\alpha/2})$ other trees into tree $T$. The bound on the number of merged trees follows from that the top trees that are being merged into $v$ used to be sibling nodes of some ancestor of $v$. Since there are only $\bigO(\frac{1}{\alpha})$ ancestors of $v$, and each of them has up to $\bigO(n^{\alpha/2})$ sibling nodes, the total number of new nodes in $v$ is $\bigO(\frac{1}{\alpha}n^{\alpha/2})$. 

To execute the merge, for each edge we identify a node $v_i$ of top tree $T$ that has rank $r_i$ and contains endpoint of $e_i$ that is in $T$. To do so, we use the reference sets of the endpoints of $e$. Then we merge $T_i$ with $v_i$, and the merge is executed by the parent node of $v_i$. In other words, we look at a level of $T$ on which we have subnodes of rank $r_i$ and add $T_i$ to the list of children of a properly chosen node $v$ of rank $r_i+1$.

Let us consider a top tree node $v$. The merge operation may require us to add the trees $T_1, \dots, T_{k'}$ to children of $v$. Since any node of a top tree has $\bigO(n^{\alpha/2})$ direct sub-nodes and $k' \leq k \in \bigO(n^{\alpha/2})$, we can gather into memory of a single machine all grandchildren nodes of $v$ and all children of root nodes of $T_1, T_2, \dots, T_k'$. All gathered top tree nodes need to become the grandchildren of $v$. Since a single machine sees all nodes that need to become grandchildren of $v$, and it knows which nodes are siblings, it can compute a partition of all grandchildren of $v$ into children nodes of $v$ in such a way, that all children of $v$ meet the size constraints. This follows from that $T$ cannot have underloaded internal nodes, which means that we have at least $n^{\alpha}$ grandchildren of $v$, and they can be splitted into at least $n^{\alpha/2}$ vertices that are not underloaded. Furthermore, since each $T_i$ introduces at most one new child of $v$, as a result we get that the umber of children of $v$ increased by at most $\bigO(n^{\alpha/2})$. 

Therefore, at the end of rebalancing on all levels, the number of children of any top tree nodes could increase by $\bigO(n^{\alpha/2})$. Therefore, the size constraint on any internal top tree node is violated at most by a constant factor.

\paragraph{Clean-up phase}
As mentioned before, after the first phase of rebalancing, we remove all top tree nodes that are underloaded, but possibly introduce some top tree nodes that could be overloaded (too large by a constant factor $x$). To clean this up, we use following pipelined splitting protocol. 

On all levels of the tree in parallel we do the following: simultaneously, each node that is overloaded splits itself into up to $x$ parts and notifies the parent node about this. Since the splitting happens simultaneously on all levels, it could be the case that the parent node that should receive the notification also gets splitted. Let us consider a parent node $X$ that splits itself and a child node $Y$ that also splits itself. The node $X$ knows the new partition of $X$ and knows which of the new parts contains $Y$. Let us call this part $X'$. Then the node $X$, while creating $X'$, instead including $Y$ as a child of $X'$, it includes all nodes that are created out of $Y$ as a children of $X'$. 

Before we take into account splitting of the child nodes, each newly created node has size within the required constraints for the size of a top tree node. Since each child node splits itself into at most $x$ parts, the size of each newly created node is at most $x$ times too large. Furthermore, in order to create overloaded node, the level lower also had to contain an overloaded node. Therefore, in each step the minimum rank of the overloaded node increases by $1$. Hence, the whole cleanup phase ends in time proportional to the depth of the tree, which is $\bigO\paren{\frac{1}{\alpha}}$.

\paragraph{Summary}
Since the first phase got rid of all underloaded top tree nodes, and the second phase splitted all overloaded top tree nodes (while not introducing new underloaded top tree nodes), after two phases all internal nodes meet the size constraints on the top tree nodes. Furthermore, since all edges of the input graph still are top tree nodes of rank $0$, the depth of the rebalanced tree is $\bigO(1/ \alpha)$.\end{proof}

\subsection{Edit operations -- split and link.}\label{subsec:split_and_link}
In this subsection we propose the algorithms for linking and splitting the top trees. We show how to execute those operation in such a way that the resulting top trees can be balanced by the algorithm from \cref{lem:rebalance} described in \cref{subsec:rebalance}.

\paragraph{Edit operations -- split (delete edge)}
To execute the splitting, for each edge $e$ we identify the part of the tree that gets disconnected from the root. To do so, for each edge we identify \emph{the splitting node} of the top tree. A splitting node of the top tree for an edge $e$ is the highest rank node $v$ of the top tree that contains $e$ and the root-endpoint of $e$ is a boundary vertex of $v$. Then the parent node of $v$ splits the set of sub-nodes into parts that are reachable from the root without going through $v$, and those that are not. The parts that are not reachable from the root, become new instances of the top tree, for the connected components created by the edge delete operation. The original tree may contain now some underloaded internal nodes.  

All those operations can be executed simultaneously, for all edges. As a result, we get a set of data structures, one per connected component, but those structures do not necessarily meet the size constraints of the internal nodes, as some of the internal nodes may be underloaded. At this point, we have a top tree that is almost balanced, hence we can use the rebalancing algorithm from \cref{lem:rebalance}.

\paragraph{Edit operations -- link (add edge)}
On the top level, the tree link operation can be executed as follows: firstly, we treat each edge as a node of rank $0$ and merge it into a node that contains one of its endpoint. Then we merge all trees that share a vertex. Both operations can be understood as a variant of a merge that was executed during rebalancing after the delete operation. However, the number of data structures that are being merged into one vertex could be $\Theta(n^{\alpha})$, which prevents us from direct application of merge protocol from the rebalancing algorithm. 

Therefore, as the first step, a vertex $v$ of rank $r$ that performs the merging treats the nodes to be merged as each of the nodes does not violate the size constraints. Therefore, we do not need to look into children of all merged nodes. Instead, we partition $v$ into $\bigO(n^{\alpha/2})$ nodes of the same rank $r$. Then such $v$ notifies the parent node about the fact that in place of $v$ we now have $\bigO(n^{\alpha/2})$ nodes of rank $r$. After this operation, we have some node that possibly are underloaded (trees that were merged in), and some that are possibly overloaded (a parent node of $v$). 

The good news is that the parent of $v$ can be too large only by a constant factor. This is because in order to increase the parent node size, a child node has to split. In order to split a child into $x > 2$ parts, we have to add at least $\Theta(x \cdot n^{\alpha/2})$ nodes to this child. Therefore, each parent node increases size by at most $\bigO(n^{\alpha / 2})$ from children splitted into two parts, and $\bigO(n^{\alpha}) / \Theta(n^{\alpha/2}) = \bigO(n^{\alpha/2})$ nodes from children splitted into at least three parts. 

Therefore, all internal nodes are either underloaded, or too large by a constant factor. Therefore, we have a top tree that is almost balanced, hence we can use the rebalancing algorithm from \cref{lem:rebalance}.

\subsection{Query operations.}  \label{subsec:queries}
Here, we propose two algorithms that allow us to use the top tree data structure to compute the value of a distributive aggregative function for paths and subtrees of dynamically changing tree. 


\subsubsection{Maintaining values for subtrees.}
\begin{lemma}
Let us consider a rooted tree $T$ for which we have a top tree data structure. For a given distributive aggregative function $f$ we can compute the value of $f$ for all subtrees of $T$ in $\bigO(1 / \alpha)$ rounds.
\end{lemma}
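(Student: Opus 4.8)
The plan is to exploit the fact that the top tree is a balanced $\Theta(n^{\alpha/2})$-ary tree of depth $\bigO(1/\alpha)$ whose nodes correspond to edge-disjoint clusters of the underlying tree, and that each cluster meets its boundary along $\bigO(n^{\alpha/2})$ boundary vertices which are all stored in the node. The key observation is that a subtree of $T$ rooted at a vertex $v$ can be written as a disjoint union of $\bigO(1/\alpha)$ clusters, one per rank: for each rank $r$, walk down the reference set of $v$ to find the rank-$r$ node $N_r$ containing $v$; the portion of the subtree of $v$ that lies inside $N_r$ but outside $N_{r-1}$ is exactly the union of those children clusters of $N_r$ that hang below $v$ (i.e.\ whose root lies on the far side of $v$ from the root of $N_r$). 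Since $f$ is distributive aggregative with combiner $g$, the value $f'$ of a union of clusters is obtained by $g$-combining the per-cluster $f'$ values, so it suffices to (i) compute $f'$ for every cluster, i.e.\ every top tree node, and (ii) for every query vertex $v$ combine the $\bigO(1/\alpha)$ relevant pieces along its reference set, finally applying $g'$.

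First I would do a single bottom-up pass to compute $f'$ for every node of the top tree. For a rank-$0$ node (an edge) $f'$ is given directly. For an internal node, $f'$ is the $g$-combination of the $f'$ values of its $\bigO(n^{\alpha/2})$ children; since a machine can gather a node together with all its children (this is already used in \cref{lem:rebalance}), one round per rank suffices, for $\bigO(1/\alpha)$ rounds total. Second, for the queries: for each rank-$r$ internal node $N$ and each boundary vertex $w$ of $N$, I would also precompute, again bottom-up, the $g$-combination of $f'$ over the children of $N$ lying on the "subtree side" of $w$ relative to $N$'s root — there are $\bigO(n^{\alpha/2})$ boundary vertices per node so this fits in a machine's memory and costs $\bigO(1/\alpha)$ rounds. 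Finally, each query vertex $v$ looks up its reference set $N_0, N_1, \dots$ (one node per rank), reads off from each $N_r$ the precomputed "subtree-side of $v$" partial value, $g$-combines these $\bigO(1/\alpha)$ values together with the direct contribution from $N_0$, and applies $g'$; routing $v$'s reference set into one place and doing the combination is $\bigO(1/\alpha)$ rounds. Batching over all $n$ query vertices is fine because total memory is $\tilde\bigO(n)$ and each round moves $\bigO(m)$ messages, consistent with the model as stated in the excerpt.

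The step I expect to be the main obstacle is making precise the claim that the subtree of $v$ decomposes cleanly as $\bigcup_r (N_r \setminus N_{r-1})$ restricted to the $v$-side, and in particular arguing that "the $v$-side of $N_r$" is well-defined and coincides with a union of whole child-clusters of $N_r$. This uses the structural invariants: the root of a top tree node is its boundary vertex on the path to the root of its parent, and sibling clusters meet in exactly one boundary vertex, so removing $v$ from cluster $N_r$ splits its children into those reachable from $N_r$'s root avoiding $v$ and those not, with $v$ itself a boundary vertex of $N_r$ (it lies in $N_r$ and on the boundary with $N_{r-1}$, hence is stored by $N_r$). One has to check the boundary/degenerate cases — $v$ equal to the global root, $v$ interior to some rank-$0$ edge handled by orienting each edge toward the root, and the top node representing the whole tree having no parent — but these are routine once the invariants are invoked. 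The rest is bookkeeping: verifying that only $\bigO(1/\alpha)$ ranks are involved, that every partial aggregate needed at $N_r$ is indexed by one of the $\bigO(n^{\alpha/2})$ boundary vertices it already stores, and that all combinations respect associativity of $g$ (which follows from $f'(E_1\cup E_2)=g(f'(E_1),f'(E_2))$ applied to the disjoint edge sets of the clusters).
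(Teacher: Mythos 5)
Your first phase (bottom--up computation of $f'$ for every cluster) and your decomposition idea (at each rank, all children of $N_r$ other than the root-side child are either entirely inside or entirely outside the subtree of $v$) coincide with the paper's argument. The gap is in the query-assembly step. You claim that the rank-$r$ contribution to the subtree of $v$ can be read off as a single precomputed value stored at $N_r$ and \emph{indexed by $v$ as a boundary vertex of $N_r$}, justifying this with ``$v$ lies in $N_r$ and on the boundary with $N_{r-1}$, hence is stored by $N_r$.'' That assertion is false for a typical vertex: $v$ is a boundary vertex only if it is shared between two sibling clusters, and most vertices are interior to their rank-$(r-1)$ cluster (all their incident edges inside $N_r$ lie in $N_{r-1}$), so $N_r$ does not store $v$ at all. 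For such an interior $v$, the part of its subtree lying in $N_r$ but outside $N_{r-1}$ is the union of the sibling components hanging at those boundary vertices $w$ of $N_{r-1}$ that are descendants of $v$; this value genuinely varies over the (possibly $\Theta(n)$ many) interior vertices of $N_{r-1}$, so it is neither one of the $\bigO(n^{\alpha/2})$ per-boundary-vertex aggregates you precompute at $N_r$, nor can $N_r$ afford to store one value per such $v$. Deciding which $w$ lie below $v$ requires ancestry information inside $N_{r-1}$, which your flat ``one lookup per rank along the reference set'' scheme never obtains. (A secondary issue: your decomposition needs $N_{r-1}$ to be the child containing the parent edge of $v$; the reference set as defined is just one containing node per rank and need not be this root-side chain, so you would have to fix the chain yourself.)

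Repairing this essentially forces the structure of the paper's proof: after the bottom--up phase, do a top--down pass in which, for each child $Y$ of the current node and each boundary vertex $w$ of $Y$, you push into $Y$ the $g$-aggregate of all edges of the subtree of $w$ lying outside $Y$ (combining the inductively known outside-of-$X$ contribution with the $f'$ values of whole sibling clusters, using exactly the all-or-nothing observation you state). Only at the bottom, where a cluster fits in a single machine together with these per-boundary-vertex aggregates, are the answers for \emph{all} vertices of that cluster -- including interior ones -- assembled locally, since that machine can see which boundary vertices lie below each of its vertices. So your proposal is not a complete proof as written: the lemma asks for $f$ on the subtrees of all vertices, and your assembly step only works for the small minority of vertices that happen to be boundary vertices at every rank of their reference chain.
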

\begin{proof}
  The algorithm computing the value of $f$ is a generalization of the parallel prefix algorithm. The algorithm works in two phases, in the first we traverse the top tree in a bottom-up fashion, computing the value of $f'$ for all clusters of $T$ that correspond to the nodes of the top tree. Then, in the second phase, we traverse the top tree data structure top--down, and compute the value of $f$ for the subtrees rooted in particular vertices.
  
  The first, bottom--up phase has a rather straightforward implementation -- assuming that we know $f'$ for all clusters that are subnodes of a given top tree node $X$, we can compute $f'$ for the cluster represented by $X$ using function $g$. Since we assumed that computing the value of $f'$ for singleton sets can be done, processing the top tree bottom--up computes the $f'$ for all clusters that are represented by the nodes of the top tree.
  
  Here we focus on the second, top--down phase. Starting in the root node $X$ of the top tree, we have the value of $f'$ for all clusters represented by the sub-nodes of $X$. The idea is to compute $f'$ recursively, by processing all subnodes in parallel. 
  
  We assume that when we start to process a top tree node $X$, for each vertex $u$ in the cluster represented by $X$ we know the value of $f'$ on all edges that are in the subtree of $u$ but not in the cluster represented by $X$.
  
  On a single level of recursion, for each vertex $v$ that is a boundary vertex there is at most one subnode $Y$ of $X$ that contains $v$ and the vertices that are closer to the root than $v$. The information we need in order to be able to properly process the recursive call in $Y$ is the value of $f'$ on all edges in the subtree of $v$ that are not in $Y$. Fortunately, for all clusters except $Y$, either all edges from a cluster are in the subtree of $v$ or none of them is. Therefore, computing the value of $f'$ on all edges in the subtree of $v$ that are not in $Y$ can be computed from 
  \begin{itemize}
  \item the values of $f'$ for subnodes,
  \item from the values of $f'$ for vertices $u$ that have part of their subtrees outside of $X$.
  \end{itemize}
  This can be done, because all edges that are in a subtree of $v$ that are not in $Y$ are either in some cluster corresponding to a subnode of $X$ or completely outside of the cluster corresponding to $X$. For all edges that are in the subtree of $v$ and in $X$ (but not in $Y$) we use the values of $f'$ computed for clusters, and for all edges that are outside of $X$ we use the values of $f'$ computed for the boundary vertices of $X$.
\end{proof}

\subsubsection{Batch $(u,v)$-path query operation.}
Here, we propose a protocol that recovers a value of a distributive aggregative function $f$ on a given path. The path is defined by the two endpoints, and the queries can be executed in batches. A single query is executed in a recursive manner. We use that a path from $u$ to $v$ can be splitted into parts that are fully contained by the sub-node of the node answering the query, therefore computing the value of $f$ on the parts of the path connecting $u$ and $v$ contained in sub-nodes is sufficient to compute the heaviest edge on the whole path. In this paragraph we discuss this idea in more details and provide that this approach gives an $\bigO(\frac{1}{\alpha})$ round protocol, that uses only $\bigO(n)$ global memory, while respecting the $\bigO(n^{\alpha})$ bound on a memory of a single machine.

\paragraph{Execution of  batch $(u,v)$-path query operation}
The execution of the protocol in a single node of the top tree is following. For nodes of rank $2$ we can gather whole subtree represented by this node and answer the queries. For nodes of larger rank, for each query pair, we compute a sub-node path between the given vertices, i.e. set of sub-nodes that covers the whole path between the query vertices. If $u$ and $v$ are in the same sub-node, the node recursively asks it for the answer (we call it \emph{direct sub-query}). Otherwise, for each internal subnode on the $(u,v)$ path, it generates a subquery $(u',v')$, where $u'$ and $v'$ are boundary vertices of this structure on the $u,v$ path (we call it \emph{internal sub-query}), while for the structures that hold the endpoints of the  the node asks for $(u,v')$ and $(u',v)$, where u' is a boundary node from node holding $v$ towards node holding $u$, and $v'$ is a boundary node from node holding $u$ towards node holding $v'$ (we call it \emph{endpoint sub-query}). The result is maximum of the answers gathered from recursive sub-queries. 

\paragraph{Communication complexity -- local bounds}
Executing naively all subqueries in parallel could result in large number of subqueries, i.e. we could generate $\Theta(n^{\alpha/2})$ subqueries for a single query, hence $\Theta(n^{3\alpha/2})$ in total. To bypass this issue, we use the fact that the number of distinct subqueries we need to ask is $\bigO(n^{\alpha/2})$, and each sub-node does not receive more subqueries than the parent node. Obviously, the number of \emph{direct subqueries} cannot be larger than the number of original queries, hence it is $\bigO(n^{\alpha})$. Let us consider queries that have endpoints in different sub-nodes: firstly, each query generates two \emph{endpoint sub-queries}, so those contribute $\bigO(n^{\alpha})$ to the total number of sub-queries. Finally, each sub-query is defined by a pair of boundary nodes of a sub-node. Since by \cref{lem:boundary_vertices}, there only $\bigO(n^{\alpha/2})$ boundary vertices in sub nodes, the total number of possible subqueries is at most $\bigO(n^{\alpha/2})^2 = \bigO(n^{\alpha})$. This means that all those subqueries can be sent to sub-nodes in $\bigO(1)$ rounds. Furthermore, for each query, we generate at most one sub-query per sub-node, hence the total number of subqueries is at most constant factor larger than the number of original queries, and each sub-node gets no more than the number of original queries.
 
\paragraph{Communication complexity -- global bound}
Additionally, we can give a bound on the total number of queries that are generated during the execution. Since the query is answered in the top tree nodes of rank $2$ and there are only $\bigO(n^{1-\alpha})$ such nodes, even if all those nodes get a sub-query, for each of all $\bigO(n^{\alpha})$ queries, the total number of sub-queries is $\bigO(n^{1-\alpha})\cdot \bigO(n^{\alpha}) = \bigO(n)$.

\section{Batch Dynamic MST.} \label{sec:dynamic_msf}
In this section we propose an algorithm for the Batch Dynamic Minimum Spanning Forest problem, in the \MPC model. 
\begin{theorem}
\label{thm:msf_dmpc}
Given an $n$ node, $m$ vertex graph, it is possible to solve the Batch Dynamic Minimum Spanning Forest problem in the \MPC model with $\bigO(n^{\alpha})$ memory limit on a single machine using 
\begin{itemize}
\item $\bigO(\frac{1}{\alpha^2}\log n)$ rounds for preprocessing,
\item $\bigO(\frac{1}{\alpha})$ rounds to process a batch of updates of size $\bigO(n^{\alpha})$.
\end{itemize}
\end{theorem}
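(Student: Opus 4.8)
The plan is to keep, as the persistent state of the algorithm, a minimum spanning forest $F$ of the current graph together with a balanced top tree of $F$, and to show (i) how to build this state and (ii) how to update it on a batch.

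\emph{Preprocessing.} I would first compute $F$ by a standard Bor\r uvka-style \MPC algorithm: $\bigO(\log n)$ contraction phases, each realised by one bottom-up/top-down aggregation pass over an auxiliary balanced structure in $\bigO(1/\alpha)$ rounds (components may span many machines, so each lightest-incident-edge step costs the depth of that structure). Then I would build the top tree of $F$ rank by rank, $\bigO(1/\alpha)$ rounds per rank over $\bigO(1/\alpha)$ ranks. The dominating term gives $\bigO(\frac{1}{\alpha^{2}}\log n)$ rounds, and this is done once.

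\emph{The generalized cycle property.} Given a batch $U=(u_1,\dots,u_k)$ with $k\in\bigO(n^{\alpha})$, I would first discard every deletion of a non-tree edge, since it changes no minimum spanning forest and hence no $F_x$. Let $D$ be the deleted tree edges and $S$ the inserted edges, and call \emph{terminals} the $\bigO(k)$ endpoints of $D\cup S$. Using the top tree, contract the union of the $F$-paths between endpoints of edges of $S$ into its skeleton $T_S$: it has $\bigO(k)$ terminal/branch vertices and hence $\bigO(k)$ \emph{segments}, and a cut argument shows that within a segment only its heaviest edge can ever be displaced by an insertion; let $R_{\mathrm{ins}}$ be the set of these segment maxima. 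For each $d\in D$ let $e^{*}_{d}$ be the lightest edge of the updated graph (other than $d$) crossing the fundamental cut of $d$ in $F$; these are the only candidates a deletion can bring into the forest. Put $R=D\cup R_{\mathrm{ins}}\cup\{e^{*}_{d}:d\in D\}$, so $|R|\in\bigO(k)$; let the \emph{active components} be the components of $F\setminus(D\cup R_{\mathrm{ins}})$ incident to an edge of $R\cup S$ — there are $\bigO(k)$ of them — and let $H$ be the weighted multigraph on the active components whose edges are $R\cup S$ with their original weights, self-loops discarded. The key lemma, generalizing the classical cycle property, is that $H$ encodes the whole history at once: for every $x$, $F_x$ is obtained from $F$ by removing the edges of $D$ among $u_1,\dots,u_x$, keeping the untouched part intact, and replacing the active part by a minimum spanning forest of the submultigraph of $H$ induced by $\{u_1,\dots,u_x\}$. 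I would prove this by the usual cut/cycle exchange, invoking the two observations above to rule out that any edge outside $H$ is ever added to, or ever undercuts, the running minimum spanning forest.

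\emph{Carrying it out.} The skeleton $T_S$ and its segment maxima come from a batch of path-max queries ($\max$ is distributive aggregative), which by \cref{subsec:queries} costs $\bigO(1/\alpha)$ rounds; the edges $e^{*}_{d}$ come from a path/subtree aggregate over the top tree in which each of the $\bigO(m)$ non-tree edges ``covers'' its $F$-path with its weight and each $d\in D$ reads off the minimum cover, again $\bigO(1/\alpha)$ rounds with $\bigO(m)$ global communication (as allowed in \cref{r:batch_size}). Since $|V(H)|,|E(H)|\in\bigO(k)\subseteq\bigO(n^{\alpha})=\bigO(S)$, the graph $H$ together with the ordered list of $k$ updates fits in one machine; a minimum spanning forest of $H$ is delegated to the $\bigO(1)$-round \MPC algorithm for the linear-memory regime~\cite{DBLP:conf/soda/Jurdzinski018,Nowicki19} (run with parameter $\bigO(S)$), while the sequence $U'$ of updates to $F$ and the prefix lengths $y_1,\dots,y_k$ are produced by a sequential replay of the $k$ updates inside that machine's local computation. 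Finally, the $\bigO(k)$ edits of $U'$ are applied to the stored top tree by the batch link/split operations of \cref{subsec:split_and_link} followed by one rebalancing (\cref{lem:rebalance}), a further $\bigO(1/\alpha)$ rounds. Summing, a batch costs $\bigO(1/\alpha)$ rounds.

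\emph{Main obstacle.} The hard part is the generalized cycle property, and within it two points: that a set $R$ of size $\bigO(k)$ — the segment maxima together with the replacement edges $e^{*}_{d}$ — genuinely suffices, so that no edge ever belonging to some $F_x$ (nor any edge that is ever the cheapest available replacement or displacee) escapes $H$; and that this holds \emph{simultaneously for every prefix}, even though a tree edge deleted early may be replaced by an edge deleted later, and an insertion may displace an edge that a later insertion would also have displaced. The remaining bookkeeping — correctness of projecting $U'$ back onto the top tree, per-node memory, round counts of the traversals — is routine given \cref{lem:boundary_vertices}, \cref{lem:rebalance}, and the query machinery of \cref{subsec:queries}.
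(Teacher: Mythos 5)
Your overall architecture (top tree built alongside Bor\r uvka for preprocessing, an $\bigO(k)$-size candidate set of ``relevant'' edges, a contracted graph on $\bigO(k)$ components that fits on one machine, local replay of the update sequence, then batch link/split and rebalance) is the same as the paper's. The gap is in the key lemma itself: your candidate set is too small, in two related ways, so the ``generalized cycle property'' you state is false as written. First, for deletions you take a single candidate $e^{*}_{d}$ per deleted tree edge, namely the lightest edge crossing the fundamental cut of $d$ \emph{in the original forest $F$}. Cascading deletions break this. Take the path $F=\{(1,2),(2,3),(3,4)\}$ with unit weights and non-tree edges $(1,3)$ of weight $3$, $(2,4)$ of weight $4$, $(1,4)$ of weight $10$, and the batch ``delete $(1,2)$, delete $(2,3)$''. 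The fundamental cuts of both deleted edges in $F$ have $(1,3)$ as their lightest crossing edge, so $\{e^{*}_{d}\}=\{(1,3)\}$ and, with no insertions, $R_{\mathrm{ins}}=\emptyset$; but $F_2$ must contain $(2,4)$, which is not in your $H$, so the replay cannot output it. The correct candidate set (the paper's $R$ in \cref{lem:ext_cycle_property}) is the set of \emph{new} edges of a minimum spanning forest of the graph with \emph{all} deletions applied; it can have up to $|D|$ edges that are not fundamental-cut minima of any single $d$, and the paper obtains it by solving an MSF instance on the $\bigO(k)$ components of $F\setminus D$ with all surviving inter-component edges, using the linear-memory $\bigO(1)$-round MSF algorithm with $\bigO(m)$ global memory.

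Second, and for the same reason, your path-maxima set $R_{\mathrm{ins}}$ is computed only over $F$-paths between endpoints of \emph{inserted} edges. Once a deletion has pulled a replacement edge $r$ into the forest, a later insertion can close a cycle that runs through $r$ and then along original $F$-paths between an endpoint of $r$ and an insertion endpoint; the heaviest original tree edge on such a cycle need not lie on any path between two insertion endpoints, so it escapes $H$ and the replay keeps an edge that the true $F_x$ discards. This is why the paper's terminal set $V_U$ includes the endpoints of $D$, of the replacement set $R$, and of the insertions before taking pairwise path maxima (your segment-maxima-of-the-Steiner-tree formulation is fine and even a convenient superset, but it must be applied to this larger terminal set, which also forces computing $R$ \emph{before} the path-maxima step). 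With these two corrections the rest of your write-up (round counts, memory bounds, local replay producing $U'$ and the prefix indices, and the final top tree edits) goes through essentially as in the paper.
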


\begin{proof}
\paragraph{Building the initial top tree data structure}
To compute minimum spanning forest in the \MPC model with $\bigO(n^{\alpha})$ local memory limit, we use a variant of Boruvka's algorithm \cite{Boruvka}. The algorithm maintains a forest that is a subgraph of minimum spanning forest of the input graph, and gradually, in phases, grows this forest. To do so, the algorithm for each vertex chooses a minimum weight outgoing edge, and for each component tosses a coin. To grow the forest, we use all edges that were selected by components that tossed a head and have the other endpoint in a component that tossed a tail. 

This variant of the Boruvka's algorithm with high probability requires only $\bigO(\log n)$ rounds to finish computation, and in a single phase we merge only star shaped graphs, which can be done efficiently in the \MPC model. This algorithm allows us to build the top tree data structure along the way. Whenever we execute a merge in the Boruvka's algorithm, we also merge top trees for all merged underlying trees. 

For a single star, we can sort all the subtrees that have to be merged, in $\bigO(1/\alpha)$ rounds \cite{DBLP:conf/isaac/GoodrichSZ11}. Each machine holding a block of subtrees to be merged initiates a sequence of merges of corresponding top trees. As a result we get some set of partially merged trees. For each of those partially merged trees, we have a new top tree, and the number of trees to be merged in this star is $\Theta(n^{\alpha})$ times smaller. Therefore, iterating this idea $\Theta(1 / \alpha)$ times ends the merging phase, computing corresponding top trees along the way. Each phase of this merging algorithm requires to run merge sorting algorithm, that requires $\bigO(1 / \alpha)$ rounds, and merging operation on top trees, which also takes $\bigO(1 / \alpha)$ rounds. In total, a single phase of Boruvka's algorithm takes $\bigO(1 / \alpha^2)$ rounds, and whole algorithm $\bigO(\frac{1}{\alpha^2} \log n )$ rounds. For a constant $\alpha$ the round complexity is $\bigO(\log n)$.

\paragraph{Processing a batch of updates}
Similarly, as in \cite{batch_dynamic_soda} we firstly focus on processing a batch of updates in a offline manner. That is, we compute the set of updates to the MST that need to be applied after we process whole batch of updates.

Firstly, we process all delete operations. If an edge is deleted, a new edge could be added to the minimum spanning forest, and it is either an edge from the original graph, or it is one of the edges that is a part of the update sequence. The goal of this part is to identify all edges that could be a good replacement edge after some delete, that come from the original graph. 

Then we process all insert operations in order to identify all edges that could violate the cycle property after executing some insert operation. The cycle property says that an edge that is the heaviest edge on a cycle cannot be a part of the minimum spanning forest. Therefore, inserting edge $e$ to the graph, could result with $e$ becoming a part of the minimum spanning forest. Furthermore, if adding $e$ creates a cycle, we need to remove the heaviest edge on this cycle. 

Here, we want to apply whole batch of edge insert and delete operations at once, using only the information provided by the top tree for unaltered variant of the underlying tree. This means that for the $j$th insert we do not really know what would be the spanning forest after execution of $(j-1)$th update operation, which makes identification of the edges that could violate the cycle property a non trivial task. Therefore, we want to identify a set of edges that contains all the edges of the original graph, that could violate the cycle property at each step of the update sequence.

\begin{lemma} \label{lem:ext_cycle_property}
  Let us consider a graph $G$, its minimum spanning forest $F = (V,E_F)$, and some sequence of updates $U$ of length $k$. Let 
  \begin{itemize}
  \item $D = {d_1, d_2, \dots, d_{k_1}}$ be a sequence of edges to delete, 
  \item $R = {r_1, r_2, \dots, r_{k_2}}$ be a set of edges that would be included to the minimum spanning forest, if we execute deletes from $D$,
  \item $I = (i_1, i_2, \dots, i_{k_3})$ is a sequence of edges to be inserted to $G$
  \item $V_U$ be a set of vertices incident to the edges from $D, R, I$. 
  \item $S$ be a set of edges of $F$ that for each pair of vertices $u,v \in V_U$ contains the heaviest edge on the path between $u$ and $v$ in $F$. 
  \item $F' = (V, E_F \setminus (S \cup D) )$.  
  \end{itemize}
  Then we have that $|S| \leq 4k-1$ and the minimum spanning forest after each update consists of components of $F'$connected by some of the edges from $U$, $R$, and $S$.
\end{lemma}

\begin{proof}
  To prove this lemma, we show the following:
  \begin{enumerate}
  \item $|R| \leq |D|$
  \item $|S| \leq 2(|U| + |R|)-1$
  \item after any delete operation, a replacement edge (if exists) is in the set of edges from $U$ or $R$
  \item after each insert operation, an edge violating the cycle property (if exists) is in the set of edges from $U$ or in $R \cup S$.
  \end{enumerate}
  
  \paragraph{Bound on size of R and S (points 1 and 2)}
  The first statement simply says that for each deleted edge, we can have at most one replacement edge. The set $S$ is defined as set of edges, such that for each pair of vertices from $(u,v) \in V_U$ it contains the heaviest edge on the path from $u$ to $v$ over the edges of forest $F$. Let us consider the edges of $F$ from heaviest to the lightest. If the considered edge separates a pair of vertices from $V_U$, we include it to $S$ and remove it from $F$. Since each edge included to $S$ splits a set of vertices into two smaller sets, we can do at most $|V_U|-1$ splittings. Therefore, $|S| \leq |V_U|-1$. Since $|V_U| \leq 2(|D| + |R| + |I|)$ and $|R| \leq |D| \leq |D|+|I|$, we have $|S| \leq |V_U|-1 \leq 4(|D| + |I|)-1 = 4k-1$.
  
  \paragraph{All relevant edges are in sets D, R, I or S (point 3, computing set R)}
  Let us consider an update operation that deletes an edge $e$. If the replacement edge $r$ exists, it is the lightest edge that connects two connected components of the spanning forest obtained by removing $e$. If the edge $r$ is either a member of $I$ or $D$, then we are done. 
  
  Let us assume that's not the case. Let $G^{<r}$ be a graph induced by edges of $E \setminus D$ that are lighter than $r$. If in $G^{<r}$ the endpoints of $e$ are in a single connected component, then in $G^{<r}$ there exist a path between the endpoints of $e$ using only edges lighter than $r$, and none of them got deleted. Therefore, $r$ would not the best replacement edge, which implies that the endpoints of $r$ cannot be in the same connected component of $G^{<r}$. If the endpoints are in separate connected components of $G^{<r}$, then $r$ is the next edge that is to be considered by the Kruskal algorithm, hence it is part of a minimum spanning forest of $(V, E \setminus D)$. Therefore, $r$ is a part of set $R$.
  
  \paragraph{All relevant edges are in sets D, R, I or S (point 4, computing set S)}
  Let us consider an update operation that inserts an edge $e$. The inserted edge could create a cycle in the spanning forest, and our goal is to remove the heaviest edge on such cycle. This cycle could go over several edges that were included during the execution of the update sequence. Any cycle in a graph that is obtained by adding $e$ to a graph obtained by executing a prefix of the update sequence can be splitted into parts, each consisting of paths containing only edges from $U \cup R$ and paths containing only the edges of the original minimum spanning forest. For the first kind of path, we know explicitly all the edges, hence we know an edge with maximum weight on this part of the cycle. The paths of second kind connect the vertices of $V_U$. Therefore, by the definition of $S$ we include maximum weight edge on such path in $S$. Therefore, for any cycle that could be created by inserting an edge, after some prefix of the updates is executed, the edge with maximal weight on such cycle is in $U \cup R \cup S$.  
\end{proof}

Therefore, if we can identify all edges of $R$ and  $S$, and all components of $F'$ incident to those edges, we can simulate the sequence of updates in the local memory of a single machine. In following two paragraphs, we briefly discuss implementation of the approach from \cref{lem:ext_cycle_property} in the \MPC model.

\paragraph{Processing delete operations}
In order to find all replacement edges, we exploit the following:
\begin{fact}
    Let us consider a graph $G$ and its minimum spanning forest $F = \set{T_1, T_2, \dots, T_x}$, and the set $E_d$ of $k$ edges to delete. Let $F' = \set{T'_1, T'_2, \dots, T'_y}$ be a spanning forest of $F$ without the edges from the set $E_d$, and $G'$ be a graph $G$ without the edges from the set $E_d$. Then, all edges of the minimum spanning forest of graph $G'$ that are not in $F$ are between $T'_i$ and $T'_j$, such that $T'_i \notin F$ and $T'_j \notin F$.
\end{fact}

In other words, all edges that could be included in the new minimum spanning forest have to be between the connected components of the maintained forest created by deleting the edges. Since we remove only $\bigO(n^\alpha)$ edges, we create only  $\bigO(n^\alpha)$ connected components in $F$ by removing the edges. Furthermore, having the top tree data structure, we can identify all vertices inside of those connected components in $\bigO(\frac{1}{\alpha})$ rounds. Hence, we can identify all the edges that could be part of the new minimum spanning forest in $\bigO(\frac{1}{\alpha})$ rounds. Having those components and edges, we can treat them as an instance of the Minimum Spanning Forest problem, with $\bigO(n^\alpha)$ vertices, and some number of edges $m' \leq m$. Thus, we can solve it by $\bigO(1)$ round algorithm \cite{DBLP:conf/soda/Jurdzinski018, Nowicki19}, with $\bigO(m')$ global memory and $\bigO(n^\alpha)$ memory limit of a single machine.

\paragraph{Processing insert operations}
In order to compute set $S$, for a given top tree node $v$ and a set of vertices $V_U$ to be separated, we can use a recursive approach. We represent the computed edges as a tree, obtained by contracting all edges that are not the part of the resulting set  of separating edges $S$.  Firstly, let us consider the subnodes of $v$ that correspond to clusters that contain the vertices of $V_U$. Let us assume that we have $x$ such clusters, and let $V_i$ be the set of vertices to be separated that falls to the $i$th cluster. In order to identify all edges that could be the part of the result because they separate vertices of $V_i$ from vertices that are in other cluster, we extend $V_i$ by all boundary vertices of the $i$th cluster. Furthermore, to take into account the edges that do not belong to any of the clusters containing the vertices of $V_U$, but possibly could separate the vertices in different clusters, we also recursively compute a set of edges that separates the boundary vertices, for any cluster that corresponds to a subnode of $v$ that does not contain the vertices of $V_U$. 

To compute the tree representing the edges of $S$, we connect all trees computed by the recursive calls, using the boundary vertices of the clusters. In such tree we contract all edges that are not the heaviest edge between a pair of vertices from $V_U$.

This algorithm can be executed with $\bigO(n^{\alpha})$ local memory, because the result for $y$ points to be separated consists of $y-1$ edges. Hence, from each subproblem we obtain a tree that is either of size $\bigO(n^{\alpha/2})$ or it has the size proportional to the number of vertices of $V_U$ that fall into this subproblem. Therefore, a machine that process $v$ is capable of storing all answers of the recursive problems and can contract all the edges that was received from the recursive calls, but are not the part of the result for $V_U$. As a result, we return $S$ represented by a tree consisting of $|V_U|-1$ edges.

\paragraph{Generating a sequence of updates}
By \cref{lem:ext_cycle_property} the minimum spanning forest at each moment consist of connected components of graph $F'$ which is a spanning forest of the original graph without all edges from $S, R \text{ and } U$. In order to identify connected components of $F'$, we can use the top tree data structure: it is enough to split the minimum spanning forest of the original graph using edges of $S$, and edges of $D$ that are part of $F$. Then the connected components that are relevant to our computation are all connected components of $F'$ that are incident to the edges from $U$, $R$ and $S$.

Since $|U| = k$, $|R| \leq k$ and $|S| \leq 4k-1$, the total number of edges in $U$, $R$ and $S$ is $\bigO(k)$. The number of connected components of $F'$ that are relevant is also $O(k)$. Therefore, the graph consisting of those components and edges can be stored in the memory of a single machine, where the algorithm can simulate the update sequence locally, and generate the output sequence.
\end{proof}

\paragraph{Comparison with the previous result on the Batch Dynamic Connectivity in the \MPC model}
Durfee et al.~\cite{batch_dynamic_soda} show an $\bigO(1)$ round algorithm that can process a batch of size $\bigO(n^{\alpha-\delta})$ using $\tilde \bigO(n^{\alpha})$ communication. Even though our algorithm could use even $\Theta(m)$ messages, it solves more difficult problem, for polynomially larger batch sizes, which seems to be as large as possible (without giving any improvements for algorithms with sublinear space). Also, similarly as the algorithm by Durfee et al.~\cite{batch_dynamic_soda}, our algorithm can be used to solve a slightly more complex variant of the problem called \emph{Adaptive Connectivity} problem (or even \emph{Adaptive Minimum Spanning Forest} problem). In the \emph{Adaptive Connectivity} (\emph{Adaptive Minimum Spanning Forest} ) problem, we ask to execute a sequence of pairs of commands, each consisting from a query and a possible update. The execution of an update is conditioned on the answer to the query. For the Connectivity problem, the query asks whether two vertices are in the same connected component, and for the Minimum Spanning Forest problem, we can additionally ask whether an edge belongs to the minimum spanning forest.

In a nutshell, during the execution of batch variant of the problem, we identified the set of edges that can be parts of the final MST in a way, that allows simulation in the local memory of a single machine, no matter what are the results of the queries. For the edge deletions, the argument is quite similar as in \cite{batch_dynamic_soda}. Since we identified a set of edges that makes graph connected if all deletions would happen, we are prepared to handle any subset of the edge-delete operations. For edge insertions, the argument is quite similar. In order to compute all possible replacement edges, we do not use the weights of the inserted edges. In other words, our algorithm does not really care whether the inserted edges become the part of the minimum spanning forest. It just finds a set of edges that contains all edges that could possibly violate the cycle property, no matter which subset of the edges that are inserted to the graph becomes a part of the minimum spanning forest. Hence, it finds a set of edges that contains all possible replacement edges, for any possible sequence of query answers. 

Durfee et al.\ use the \emph{Adaptive Connectivity} problem to simulate computational circuits. Even though we can process in a constant number of rounds a batch of polynomially larger size, it does not give a huge improvement for the circuit simulation \cite[Corollary 1.2]{batch_dynamic_soda}. The reason is the fact that the algorithm of Durfee et al.\ can process a batch of updates of size $\bigO(n^{\alpha})$ in $\bigO(\log n)$ rounds. Hence a sequence of $\bigO(n)$ updates can be executed in $\bigO(n^{1-\alpha}\log n)$ rounds. Plugging in our result shaves off only the $\log n$ factor.

\section{2-Edge Connected Components in Batch Dynamic \MPC.}\label{sec:2ec_dmpc}
In this section, we propose an algorithm for the 2-Edge Connected Components problem in the Batch Dynamic \MPC model. Our goal is to maintain a set of bridges of a graph undergoing changes, as described in \cref{sec:batch_dynamic_mpc}. Additionally, after each update, our algorithm provides that all vertices have labels denoting their 2-edge connected components, i.e., any two vertices from the same 2-edge connected component have the same label, while any two vertices from different 2-edge connected components have different labels.

\begin{theorem}
\label{thm:2ec_dmpc}
Given an $n$--node $m$--vertex graph $G$, it is possible to maintain a set of bridges of $G$ and a labelling of 2-edge connected components in the \MPC model with $\bigO(n^{\alpha})$ memory limit on a single machine using 
\begin{itemize}
\item $\bigO(\frac{1}{\alpha})$ round to process a batch of updates of length $\bigO(n^{\alpha})$,
\item $\bigO(\frac{1}{\alpha^2}\log n)$ rounds for preprocessing.
\end{itemize}
\end{theorem}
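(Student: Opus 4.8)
The plan is to reduce both tasks — listing the bridges of $G$ and labelling its $2$-edge-connected components — to two aggregation queries on a maintained spanning forest $F$ of $G$, each of which the top tree data structure of \cref{sec:data_structure} answers in $\bigO(1/\alpha)$ rounds. First I would run the Batch Dynamic Minimum Spanning Forest algorithm of \cref{thm:msf_dmpc} with arbitrary fixed edge weights; this maintains a spanning forest $F$ of $G$ together with its top tree, the tree/non-tree classification of every edge, $\bigO(\frac1{\alpha^2}\log n)$ rounds of preprocessing and $\bigO(1/\alpha)$ rounds per batch. Root each tree of $F$ arbitrarily. Two standard facts drive everything: (i) an edge of $G$ is a bridge iff it is a tree edge of $F$ whose fundamental cut (the edges of $G$ with exactly one endpoint in the subtree hanging below it) contains no non-tree edge; and (ii) since $F$ spans every connected component of $G$, two vertices lie in the same $2$-edge-connected component of $G$ iff they lie in the same component of $F$ after deleting all bridges — if they were separated by a bridge on their $F$-path, that bridge would already separate them in $G$ minus all bridges, so this is immediate in both directions. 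It therefore suffices to (a) test, for every tree edge, whether its fundamental cut is free of non-tree edges, and (b) identify the pieces of $F$ left by deleting the bridges.

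\paragraph{Bridges by sketching}
For (a) I would use the well-known XOR-sketch. Give every non-tree edge $e$ an independent uniform label $\ell(e)\in\{0,1\}^{b}$ with $b=\Theta(\log n)$, and compute $s(v)=\bigoplus_{e\ni v,\ e\text{ non-tree}}\ell(e)$ for every vertex $v$; grouping the non-tree edges by endpoint and XOR-ing is a routine \MPC\ aggregation costing $\bigO(1/\alpha)$ rounds. The map $v\mapsto s(v)$ is a distributive aggregative function on vertices (take $g=g'=\oplus$ and the single-vertex value equal to the XOR of that vertex's incident non-tree labels), so by the vertex-labelled variant of the subtree lemma of \cref{subsec:queries} I can compute, simultaneously for every tree edge $f$ with lower endpoint $c_f$, the value $X(f)=\bigoplus_{u\in\operatorname{subtree}(c_f)}s(u)$ in $\bigO(1/\alpha)$ rounds; the labels of non-tree edges with both endpoints below $f$ cancel, so $X(f)$ is exactly the XOR of the labels of the non-tree edges crossing the fundamental cut of $f$. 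If that cut contains a non-tree edge then $X(f)$ is the XOR of at least one independent uniform label, hence nonzero with probability $1-2^{-b}$; a union bound over the at most $n-1$ tree edges gives that \whp\ $f$ is a bridge iff $X(f)=0$ (re-drawing the labels at each batch keeps the per-batch failure probability at $n\cdot2^{-b}=n^{-\Omega(1)}$). The algorithm stores the current bridge set across the machines and outputs its symmetric difference with the newly computed one.

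\paragraph{Component labels}
For (b), observe that inside each tree of $F$ the representative of a vertex $v$ — the topmost vertex of the piece of $F$ minus bridges that contains $v$ — is the lower endpoint of the deepest bridge on the path from the root to $v$, or the root of $v$'s tree if there is no such bridge. So I would attach to each tree edge $f$ the value $\operatorname{id}(c_f)$ if $f$ is a bridge and a null value otherwise, and propagate these values down $F$ with a last-write-wins (right-biased ``last non-null'') aggregation along root-to-vertex paths — an associative monoid, hence a distributive aggregative quantity, computed for every vertex by one top-down pass over the top tree exactly as in the proof of the subtree lemma in \cref{subsec:queries}, in $\bigO(1/\alpha)$ rounds. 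Labelling each vertex by the id it receives yields labels that agree precisely on the $2$-edge-connected components. I route this through the top-down scan rather than through the batch path-query lemma of \cref{subsec:queries} because it requires one query per vertex, which would exceed the $\bigO(n)$ global-memory budget of that lemma; the scan has no such restriction.

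\paragraph{Accounting and the hard part}
Summing up: preprocessing is dominated by building $F$ and its top tree, $\bigO(\frac1{\alpha^2}\log n)$ rounds; per batch we spend $\bigO(1/\alpha)$ rounds on the dynamic spanning-forest update, $\bigO(1/\alpha)$ rounds to refresh the $s(v)$'s and compute all $X(f)$, and $\bigO(1/\alpha)$ rounds for the last-write-wins pass, which is $\bigO(1/\alpha)$ overall — matching the claimed bounds. The real work of the proof is the two reductions of the first paragraph: arguing carefully that the bridges and $2$-edge-connected components of $G$ are determined by the forest-local quantities $X(f)$ and ``deepest bridge above $v$'' on $F$ no matter which spanning forest the MST algorithm happens to keep, and verifying that each of these is genuinely distributive aggregative so that the $\bigO(1/\alpha)$-round subtree / top-down machinery of \cref{sec:data_structure} applies verbatim; the sketching step then only needs the standard false-negative union bound, together with the observation that re-randomizing per batch keeps that bound clean across an arbitrarily long update sequence.
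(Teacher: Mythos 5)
Your proposal is correct and follows essentially the same route as the paper: maintain the spanning forest with the batch-dynamic MSF algorithm of \cref{thm:msf_dmpc}, detect bridges by aggregating a linear sketch of non-tree incidences over all subtrees via the top tree, and produce the 2-edge-connected-component labels by a bottom-up/top-down top-tree pass that, in effect, assigns each vertex the deepest bridge above it (re-randomizing the sketch each batch, exactly as the paper remarks). The only deviations are minor: the paper uses the Ahn--Guha--McGregor edge encoding and maintains the difference of two polynomial fingerprints (all edges versus forest edges) instead of your XOR of fresh random labels on non-tree edges, and it explicitly handles the double counting of shared boundary vertices when combining vertex-indexed sketches of overlapping top-tree clusters --- a correction your XOR variant also needs (XOR the shared vertex's value back in) and should state rather than fold silently into ``the vertex-labelled variant'' of the subtree lemma.
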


The remaining part of \cref{sec:2ec_dmpc} is a proof of \cref{thm:2ec_dmpc}. Our algorithm for the 2-Edge Connected Components is based on a well known observation. Given a rooted spanning tree $T$ of $G$, let $v^{\downarrow}$ be the tree subtree of $T$ rooted in $v$, $S_T(v)$ be a xor of all identifiers of the edges incident to $v$ and $S_T(v^{\downarrow})$ be a xor of $S_T(u)$ of all $u \in v^{\downarrow}$. Then if an edge $e$ from $v$ in the direction of the root is a bridge in $G$, $S_T(v^{\downarrow})$ is the identifier of $e$. The problem is that it could also happen that for some non bridge edges $e$, $S_T(v^{\downarrow})$ is the identifier of $e$ only because the xor of the identifiers of all other edges outgoing from $v^{\downarrow}$ happened to be $0$. 

One of the possible ways to avoid this problem is to assign random identifiers to the edges, see e.g. \cite[Lemma 2.4]{GhaffariMSTLogStar}. Here, we recall a slightly different approach\footnote{The reason is that \cite[Lemma 2.4]{GhaffariMSTLogStar} uses algorithm with large local computation time. While the formulation of the \MPC model does not require minimizing the local computation, we think that designing algorithms with large local computation time is contrary to the original reason for developing the \MPC algorithms.} based on the implementation of graph sketches from \cite{ahn2012analyzing} and sparse recovery techniques \cite{DBLP:journals/dpd/CormodeF14}. 

\subsection{Sketching and sparse recovery techniques.} \label{sec:tools}
The first ingredient to recall is a linear encoding from \cite{ahn2012analyzing} that allows us to encode a set of edges that are outgoing from (or having exactly one endpoint in) a set of vertices. Firstly, let us define the encoding for sets of vertices that are singleton. Let us consider a vertex $v$ with id $i$. Then the encoding of the edges outgoing from $\set{v}$ is a $n^2$ dimensional vector $S(\set{v}) \in \set{-1,0,1}^{n^2}$ such that:
\begin{itemize}
  \item $S(\set{v})_{j,k} = 1$ iff $i=j$ and there is an edge from $v$ to a vertex with id $k$,
  \item $S(\set{v})_{j,k} = -1$ iff $i=k$ and there is an edge from $v$ to a vertex with id $j$,
  \item $S(\set{v})_{j,k} = 0$ otherwise.
\end{itemize}
Then the encoding of the edges that are outgoing from the set of vertices $V$ is a sum of encodings of all vertices in $V$.

This encoding emulates the behaviour of the xor function. For our purpose, the sum of encodings for all vertices in $v^{\downarrow}$ is an encoding that describes a set of edges that have exactly one endpoint in $v^{\downarrow}$. Therefore, if it encodes a single edge $e$, then $e$ is a bridge going from $v$ in the direction of the root.

The second ingredient to recall is a very simple case of a sparse recovery technique described, e.g., in \cite{DBLP:journals/dpd/CormodeF14}. For our purpose we need another linear transformation that is encodable on $\bigO(\log n)$ bits and allows us to decide whether the encoded vector is empty or not. To achieve that, we can use fingerprints. We treat the coordinates of $v$ as a coordinates of a polynomial over $\mathbb{Z}_p$. Then a fingerprint of a vector is a value of the polynomial for a randomly chosen $z \in \mathbb{Z}_p$. 

Whenever we have a vector that contains only zeros, the value of the polynomial is always zero. For a non zero vector, we have at most $n^2 / p$ probability that we randomly chose a $z$ that is a root of the polynomial described by the vector. Hence, for a non empty vector, with probability at least $1 - n^2 / p$ the fingerprint is non empty. For sufficiently large $p$, that is still polynomial in $n$ the fingerprint allows us to identify a non empty vector with high probability.

To combine those two ingredients for our purpose, firstly we select $p$ and a random $z \in \mathbb{Z}_p$ that are known by all machines. Then we use the observation that in order to compute a fingerprint for whole $S(\set{v})$, we can compute a fingerprint of a vector that corresponds to $v$ with only one incident edge, for each edge that is incident to $v$, and then add all those fingerprints. 

\subsection{The algorithm for the $2$--Edge Connected Components problem.}
In order to obtain the Batch Dynamic \MPC algorithm for the $2$--Edge Connected Components problem we use a slight modification of the xor-based idea. For each subtree we maintain a difference of two fingerprints (computed with the same pair $z,p$). One fingerprint is computed using all edges of the input graph $G$ and the  other fingerprint that is computed using only the edges in the dynamically maintained spanning forest. We can see that if an edge going from a vertex $v$ to the root is a bridge in $G$, then the difference of those two fingerprints is always $0$. Otherwise the difference of the fingerprints is non zero, with high probability.

\begin{Remark}\label{r:batch_size}
  It is possible to skip the idea with subtracting the fingerprints defined on the edges of the subtree, and apply the sparse recovery technique from \cite{DBLP:journals/dpd/CormodeF14} to recover the bridge edge. However, to have a high probability of success we would need to use a sparse recovery data structure of size $\bigO(\log^2 n)$ bits. 
  
  While this additional $\Theta(\log n)$ factor would not impact the round complexity or the batch size that can be handled by our algorithm, it would impact the global memory that is needed by our data structure.
\end{Remark}

The algorithm is as follows. During the execution of the algorithm we maintain a spanning tree $T$ of the dynamically changing input graph $G$, using \cref{thm:2ec_dmpc}. Then, in order to identify the bridge edges, we use the fingerprint based approach. The fingerprints are linear functions, therefore on their own they are distributive aggregative. The small technical nuance is that we define the fingerprint for a set of vertices rather than a set of edges. Therefore, we cannot simply add the values of fingerprints for the set of vertices from different clusters as their sets of vertices may overlap. To fix this issue, we use that whenever we add two fingerprints, the sets of vertices overlap on at most one vertex. Therefore, if we have to compute a fingerprint of two sets $V_1, V_2$, such that $V_1 \cap V_2 = \set{v}$, we can add the fingerprints of $V_1, V_2$, and subtract the fingerprint of $\set{v}$.\footnote{A similar approach is used in the static Minimum Cut approximation algorithm for \MPC \cite{DBLP:conf/podc/GhaffariN20}, however the sketches used in \cite{DBLP:conf/podc/GhaffariN20} need to approximate the size of the cut corresponding to a connected subgraph of a tree, here we use simpler sparse recovery data structure to decide whether a given edge is a bridge.}

As we claimed, besides the bridge identification, our algorithm can also assign to each vertex a label that can be used for verification whether two vertices are in the same 2-edge connected component. Since the possible number of bridges may be $\omega(n^{\alpha})$ we cannot simply remove all bridge edges and recompute the connected components. Therefore, we propose another algorithm that uses the top tree data structure. 

Initially, we mark each bridge edge. Then our goal is to assign the same label to all vertices that are in the same 2-edge connected components (so that the vertices in different 2-edge connected components have different labels). The top level description of the algorithm is as follows. Initially, we assign a label \emph{root} to all vertices. Then, for each marked edge $e$, we identify all vertices from the subtree of $e$ reachable from $e$ over non marked edges and assign $e$ as their label. 

In order to identify all vertices that are in a subtree of $e$ and are reachable over the non marked edges, we again use the cluster decomposition provided by the top tree. Firstly we compute, for each boundary vertex of a top tree node $X$, whether it has the same label as the root of $X$, or some other label. Also, if a boundary vertex has some other label, we compute this label. To do so, we traverse the top tree in a bottom-up fashion. At the bottom, each machine sees the edges of the graph and can verify for each boundary vertex whether it is connected to the root of this cluster with a path consisting of only non marked edges. Then, while processing some top tree node $X$ that has subnodes, we use the answers from subnodes to compute the answer for $X$. More precisely, for each subnode $Y$ of $X$ we can create a graph consisting of the root of $Y$ and the boundary vertices of $Y$ with edges connecting the root of $Y$ to boundary vertices of $Y$ only if they have the same label. Let $G_X$ be the graph created by adding those graphs for all $Y$ that are subnodes of $X$.

Then whenever a boundary vertex of $X$ is reachable from the root of $X$ in $G_X$, it is also reachable in the dynamic tree via path consisting of non marked edges. Therefore, by computing the connected components of $G_X$ we can identify all boundary vertices that have the same label as the root of $X$. For those boundary vertices that are not reachable from the root of $X$ we can assign a label as follows. In order to have a connected component of $G_X$ that is not reachable from the root of $X$, one of the subnodes $Y$ had to have a boundary vertex $b$ that is not reachable from the root of $Y$. Then we know that $b$ has assigned an alternative label. Therefore, for each connected component of $G_X$ that does not contain $b$ we have one vertex with assigned label. Furthermore, this is the only labeled vertex in this component of $G_X$, as each vertex with assigned label has to be disconnected with the root of its cluster (and ancestor in the underlying tree). Therefore, there are no connections between two already labeled vertices. Thus, we can assign the label of $b$ to all other boundary vertices in the same component of $G_X$.

In order to complete the assignment of labels to all remaining vertices of the graph, we use another top--down algorithm. We start in the root of the top tree and we recursively label the vertices in clusters corresponding to all subnodes. Let $X$ be the node we process and $X_r$ be the subnode that contains the root of $X$. Then $X_r$ is labeled recursively with the same label as $X$. For all other subnodes $Y$, if $Y$ is connected to $X_r$ via boundary node that has the same label as the root of $X$, then $Y$ is labeled with the same label that was passed to $X$. Otherwise, $Y$ is labeled with the label that was computed for the root of $Y$ during the bottom-up traversal of the top tree.

\begin{Remark}
Note that since we do not limit our communication, we can compute a new set of fingerprints for each batch of updates, which gives that for each batch our algorithm computes a correct answer with high probability.\footnote{If that would not be the case, once in a while we could select new pair $z,p$ and gradually create another set of fingerprints. Then after processing $n^{1-\alpha}$ batches, we would have a second set of fingerprints that could replace the old one.}
\end{Remark}

\section{Maximal Matching in Batch Dynamic \MPC.}
In this section, we propose an algorithm for the Maximal Matching problem in the Batch Dynamic \MPC model. Our goal is to maintain a maximal matching of a graph undergoing changes, as described in \cref{sec:batch_dynamic_mpc}.

\begin{theorem}
  Let $G$ be a graph, $M$ be a maximal matching of $G$, and $U$ be a sequence of $k$ updates to $G$. Let  $G_U$ be a graph $G$ on which we applied the updates. Given $G, M$, and $U$, we can process $U$ and compute a batch of $\bigO(k)$ updates $U'$, such that applying $U'$ on $M$ gives a maximal matching of $G_U$, in $\bigO(\log 1 / \delta)$ rounds, for $k \in \Theta(S^{1-\delta})$, which for constant $\delta$ gives an $\bigO(1)$ round algorithm, and for $\delta = \bigO(1 / \log n)$, an $\bigO(\log \log n)$ round algorithm.
\end{theorem}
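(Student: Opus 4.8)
The plan is to process each batch in two stages. In the first stage I would use only elementary MPC bookkeeping (sorting and aggregation, $O(1/\alpha)$ rounds, hence $O(1)$ when $\alpha=\Theta(1)$) to reduce the whole batch to a \emph{single} static instance of the Maximal Matching problem on an auxiliary graph that carries an explicitly known vertex cover of size $\bigO(k)$. In the second stage I would hand that instance to the MPC maximal matching algorithm of Behnezhad et al.~\cite{maximal_matching_focs}, using that its resource requirements are governed by the size of the vertex cover rather than by the number of vertices, and that in our parameter regime this leaves a polynomial slack in memory which is exactly what buys the $\bigO(\log 1/\delta)$ round bound.

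For the reduction, let $A$ be the set of endpoints of the edges occurring in $U$, so $|A|\le 2k$. Let $D\subseteq U$ be the deleted edges and put $M' := M\setminus D$. Since $M\subseteq E(G)$ and $D$ are the only edges removed from $G$ when forming $G_U$, every edge of $M'$ is still present in $G_U$, so $M'$ is a matching of $G_U$; moreover any vertex matched by $M$ but not by $M'$ is an endpoint of a deleted matching edge, hence lies in $A$. Let $W$ be the set of vertices left unmatched by $M'$ and let $H := G_U[W]$. The key structural claim is that $C := A\cap W$ is a vertex cover of $H$ with $|C|\le 2k$: given any edge $\{u,v\}$ of $H$, if it was inserted by $U$ then $u,v\in A$; otherwise $\{u,v\}\in E(G)$, so by maximality of $M$ in $G$ at least one endpoint, say $u$, is matched in $M$ — but $u\in W$ is unmatched in $M'$, so $u$'s matching edge was deleted and $u\in A$. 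Now, for any maximal matching $M_H$ of $H$, the set $M'\cup M_H$ is a maximal matching of $G_U$: it is a matching since the two parts touch disjoint vertex sets, and an edge of $G_U$ left uncovered by it would have both endpoints in $W$ and hence be an edge of $H$ left uncovered by $M_H$, a contradiction. Finally set $U' := \{\,\text{delete }e : e\in D\cap M\,\}\cup\{\,\text{insert }e : e\in M_H\,\}$; since $|D\cap M|\le k$ and $|M_H|\le |C|\le 2k$ (distinct edges of $M_H$ touch distinct vertices of $C$), we get $|U'|=\bigO(k)$, and applying $U'$ to $M$ yields exactly $M'\cup M_H$. All of $A$, $D\cap M$, the relevant cover vertices, and the incidence information describing $H$ are assembled in $\bigO(1/\alpha)$ rounds by standard sorting/aggregation~\cite{DBLP:conf/isaac/GoodrichSZ11}.

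It then remains to compute a maximal matching of $H$, a graph that may have $\Theta(n)$ vertices but comes with a known vertex cover $C$ of size $\nu := |C| = \bigO(S^{1-\delta})$. Every edge of $H$ is incident to $C$, and $W\setminus C$ induces an independent set, so as matching proceeds the only per-vertex state that must be tracked is, for each of the $\nu$ cover vertices, its matching status; a vertex of $W\setminus C$ only ever needs to learn once that it has been matched. Hence the algorithm of \cite{maximal_matching_focs} can be run on $H$ with local memory $\tilde\bigO(\nu)\le S$ and ``effective'' problem size $\nu$. Moreover $S = \nu^{1/(1-\delta)} = \nu^{1+\Theta(\delta)}$, so the available memory exceeds this effective size by the polynomial factor $\nu^{\delta/(1-\delta)}$; in this superlinear-memory regime the algorithm of \cite{maximal_matching_focs} — being exponentially faster than the $\bigO(1/\delta)$-round edge-filtering of \cite{DBLP:conf/spaa/LattanziMSV11} — terminates in $\bigO(\log 1/\delta)$ rounds. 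Taking $\delta=\Theta(1)$ gives $\bigO(1)$ rounds and $\delta=\Theta(1/\log n)$ gives $\bigO(\log\log n)$ rounds.

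I expect this last step — rigorously showing that \cite{maximal_matching_focs} runs with memory and round complexity parameterized by $\nu$ rather than by $n$ — to be the main obstacle, since it requires opening up that algorithm rather than invoking it as a black box. A subsidiary point requiring care is that a single cover vertex of $H$ may have $\Theta(n)$ neighbors in $W\setminus C$, exceeding the memory budget, so locating an available neighbor of such a vertex must be done by sampling/hashing rather than by collecting its whole neighborhood; this, too, fits naturally into the framework of \cite{maximal_matching_focs}. Everything else — the structural vertex-cover claim, the correctness of gluing $M'$ with $M_H$, and the $\bigO(k)$ bound on $|U'|$ — is routine.
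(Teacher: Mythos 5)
Your reduction stage is sound and matches the paper's own first step: the paper's \cref{c:small_cover} is exactly your structural claim that the edges which can extend $M' = M\setminus D$ to a maximal matching of $G_U$ are covered by the $\bigO(k)$ endpoints of the updated edges, and the gluing of $M'$ with a maximal matching of the residual instance, together with the $\bigO(k)$ bound on $|U'|$, is the same bookkeeping the paper treats as routine. However, the part you explicitly defer --- ``rigorously showing that \cite{maximal_matching_focs} runs with memory and round complexity parameterized by $\nu$ rather than by $n$'' --- is not a subsidiary obstacle; it is the entire technical content of the paper's proof, so as written your proposal has a genuine gap where the theorem's main claim lives. The paper does not invoke the static algorithm as a black box on the cover-bounded instance: it reproves a single phase (\cref{lem:single_phase_mm}), showing that the sampled subgraphs $G_i$ have only $\bigO(k)$ edges because every edge meets the cover (\cref{claim:memory_alg_1}), and re-deriving the degree-reduction guarantee with exponentially high probability in the small-$\Delta$ regime via the bounded differences inequality (\cref{prop:bounded_diff_ineq}) or parallel repetition of $k^{0.05}$ instances.

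Moreover, your sketch of why the black-box use should work contains a specific soft spot that the paper had to repair explicitly. Reducing the degrees of the \emph{cover} vertices is not enough for the recursion of \cite{maximal_matching_focs}: its progress measure is the maximum degree of the whole residual graph, and non-cover vertices of $W\setminus C$ can retain degree far above $\Delta^{1-\Omega(1)}$ after a phase. The paper handles this with \cref{c:small_number_high_degree}: after one phase only $\bigO(k)$ vertices of the \emph{whole} residual graph are still high-degree, so they are added to the cover (keeping it of size $\bigO(k)$) and the phase is run a second time, which is what actually yields a global degree drop per $\bigO(1)$ rounds. Your remark that ``the only per-vertex state that must be tracked is the matching status of the $\nu$ cover vertices'' glosses over exactly this issue, and your final assertion that the superlinear slack $S=\nu^{1+\Theta(\delta)}$ ``buys'' $\bigO(\log 1/\delta)$ rounds is stated without the supporting analysis (the paper obtains it by rerunning the phase analysis with larger sampling probabilities, as in the original paper). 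So: same route as the paper in outline, but the proposal omits the re-analysis that constitutes the proof, including one step (the twice-per-phase trick with the extended cover) without which the deferred claim would fail.
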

\begin{proof}
Let $M$ be a maximal matching maintained by the algorithm before the updates are applied to the graph, and $M'$ be a matching $M$, from which we removed all edges according to a sequence of updates we process. Then, in order to extend $M'$ to a maximal matching of the graph after all the updates are applied, it is enough to solve a variant of the Maximal Matching problem, with a promise that the graph has a vertex cover of size $\bigO(k)$.
\begin{Claim} \label{c:small_cover}
The set of edges that can extend $M'$ to a maximal matching of the current graph has a vertex cover of size at most $2k$.
\end{Claim}
\begin{proof}
Applying $k$ updates to the graph can remove up to $k$ edges from maintained maximal matching and can insert up to $k$ new edges. Any edge that could extend $M'$ has to be either incident to an edge that got removed during the updates, or has to be a completely new edge. If $M' \subset M$ could be extended by some edge $e$ that is not incident to any edge from $M$, and is not new, then $e$ was present in the graph before the updates. Therefore, $e$ could be added to $M$, which contradicts maximality of $M$. The set of endpoints of $k$ edges (inserted and deleted) has size at most $2k$, hence all edges that can extend $M'$ are covered by at most $2k$ vertices.
\end{proof}

To solve this special variant of the Maximal Matching problem, we use an $\bigO(\log \log n)$ round Maximal Matching algorithm designed for the static variant of the \MPC model, with a linear memory of a single machine \cite{maximal_matching_focs}. More precisely, we show that the memory requirement of a single machine is actually linear with respect to the size of the vertex cover of the input graph, which combined with \cref{c:small_cover} and some small adjustments, is enough to provide an algorithm for Maximal Matching in Batch Dynamic \MPC. 

The static algorithm builds a maximal matching in phases. It starts with an empty matching, and input graph as a source of edges that can extend this matching. In a single phase it considers only a \emph{residual graph}, i.e. a graph in which there are only the edges that still can extend the matching obtained at the end of the previous phase, and extends the matching by some set of edges from this residual graphs. In order to obtain $\bigO(\log \log n)$ round complexity, the authors show that one phase can be implemented in $\bigO(1)$ rounds, and that if $\Delta_i$ is the maximal degree of the residual graph after the $i$th phase, then $\Delta_i \in \bigO(\Delta_{i-1}^{1-\Omega(1)})$, with exponentially high probability.

Here, we show that a single phase of the static algorithm, assuming existence of a vertex cover of size $k$, can be executed in $\bigO(1)$ rounds, using machines with local memory $\bigO(k)$, and guarantees that degrees of vertices in the cover drop from $\Delta$ to $\Delta^{1 - \Omega(1)}$, with exponentially high probability.

\begin{lemma}\label{lem:single_phase_mm}
  Given a graph $G$ of maximal degree $\Delta$, and its vertex cover of size $k$, it is possible to compute a matching $M$, in $\bigO(1)$ rounds of the \MPC model, with the limit on local memory $S \in \bigO(k)$, such that the maximum degree of a vertex in the cover of the residual graph is $\bigO(\Delta^{0.999})$.
\end{lemma}

\begin{proof}
  The proof is basically repetition of the analysis provided in \cite{maximal_matching_focs}, in which we use that the number of edges in a random subgraph may be bounded by estimating the number of cover vertices in this subgraph and their degrees -- as a result w get that the required memory is linear in $k$ rather than in $n$. 
  
  For the sake of completeness, we briefly recall analysis of the maximal matching algorithm \cite{maximal_matching_focs}, slightly adjusted to the case, where the algorithm works on machines with $\bigO(k)$ memory, and the input graph has a vertex cover of size $k$. Still, in order to get a more detailed version of the analysis, we recommend to look it up in \cite{maximal_matching_focs}.

The algorithm from \cite{maximal_matching_focs} consists of phases, and in each phase the maximal degree drops significantly (form $\Delta$ to $\Delta^{1-\Omega(1)}$), which implies the $\bigO(\log \log \Delta)$ round complexity. The key part of our variant of the analysis is that we give a bound on the memory requirements of the algorithm, by considering only the vertices from the given vertex cover. Furthermore, we show that the bound on the maximal degree in the cover set follows from analysis of \cite{maximal_matching_focs}. We also obtain similar bound on the degree in the whole graph, which is necessary if we want to use unaltered variant of the analysis.  To do so, we use a simple observation: after executing a phase of the algorithm, the number of high degree vertices in the whole graph is $\bigO(k)$. Therefore, running it twice, the first time to reduce the degrees of the vertices in the given cover, and the second time to reduce the degrees in the high-degree residual graph, provides a guarantee that the maximal degree of whole graph is reduced. 

In the remaining part of this section we focus on a single phase of the algorithm by \cite{maximal_matching_focs}, and show that we can implement it in a way that requires only $\bigO(1)$ rounds, and works with $\bigO(k)$ memory limit on a single machine. A single phase of the algorithm consists of two stages.

\paragraph{Stage 1}
The first stage of a phase is based on the following algorithm:
\begin{itemize}
\item partition vertices into $x = \Delta^{0.1}$ random sets $V_1, V_2, \dots V_{x}$, and $V_i^c$ denotes intersection of $V_i$ with the cover set
\item sample the edges with probability $1/\Delta^{0.85}$, let $G_i$ be a graph with the vertex set $V_i$ with edges with both endpoints in $V_i$ that survived the sampling
\item compute greedy maximal matching $M_i$ of $G_i$ , for a random ordering of the edges
\item extend the result matching $M$ by $\bigcup M_i$, remove all edges incident to $\bigcup M_i$ from the graph
\end{itemize}
Our claim is that the set of edges of each $G_i$ is $\bigO(k)$. Furthermore, the number of vertices of degree larger than $\Delta^{0.99}$ in the cover set is $\bigO(k/\Delta^{0.03})$. The bound follows from the analysis provided \cite{maximal_matching_focs} slightly modified to take into account the fact that we want to give a bound dependent on the size of the cover rather than the number of all vertices.

\paragraph{Analysis: memory requirement of Stage 1}
Firstly, we show that the algorithm can be executed using only machines with $\bigO(k)$ local memory.
\begin{Claim}\label{claim:memory_alg_1}\ 
    \begin{itemize}
        \item For every $i \in [x]$, $|V_i^c| \in \Theta(k / \Delta^{0.1})$.
        \item For every $i$, graph $G_i$ contains $\bigO(k / \Delta^{0.05})$ edges.
    \end{itemize}
\end{Claim}

\begin{proof}
The analysis is more or less the same as in \cite{maximal_matching_focs}, although we exploit the existence of a vertex cover of size $k$,\footnote{To get exponentially high probability bounds, we consider $k \in \Omega(n^{\epsilon})$.} to show a bound that depends on $k$ rather than $n$.
    
The first property follows from Chernoff bound, as $E[|V_i^c|] \in \bigO(k / \Delta^{0.1})$. For the second property, we consider two cases: $\Delta  > n^{0.01}$ and $\Delta \leq n^{0.01}$.

For larger values of $\Delta$, it is enough to observe that the expected degree of a vertex is $\bigO(\Delta^{0.05})$, and for $\Delta$ that is polynomial in $n$ it holds w.e.h.p. Since all edges have at least one endpoint in the cover set, we can bound the number of edges in $G_i$ by bounding the number of edges in $G_i$ that are incident to vertices of $V_i^c$. Since we already have that $|V_i^c| \in \Theta(k / \Delta^{0.1})$, w.e.h.p., the number of edges in $G_i$ is at most $\bigO(\Delta^{0.05}) \cdot \Theta(k / \Delta^{0.1}) = \bigO(k / \Delta^{0.05})$.

For smaller values of $\Delta$, we still have the expected number of edges that is $\bigO(k / \Delta^{0.1})$. Since the number of edges is a function of $\bigO(k \Delta)$ independent random variables, and outcome of a single variable changes the outcome of hole function by $\Delta$ (number of edges is a $\Delta$-Lipschitz function), we can use the bounded differences inequality [\cref{prop:bounded_diff_ineq}].

\begin{proposition}\label{prop:bounded_diff_ineq}[Bounded differences inequality, formulated in this useful way in \cite{maximal_matching_focs}]\\
Let $f$ be a $\lambda$-Lipschitz function on $y$ variables, and let $X = (X_1, \dots, X_y)$ be vector of $y$ independent (not necessarily identically distributed) random variables. Then, w.e.h.p. (with respect to a parameter $n$),
$$f(X) \leq \expval{f(X)} + \lambda n^{0.01}\sqrt{y}$$
\end{proposition}

Applying this inequality gives that the number of edges is $\bigO(k / \Delta^{0.05}) + \Delta k^{0.01} \sqrt{k\Delta} = \bigO(k / \Delta^{0.05})$ with probability $1 - \exp(k^{0.01})$. As long as $k$ is polynomial in $n$, it is also exponentially high probability with respect to $n$.
\end{proof}

As a corollary from \cref{claim:memory_alg_1} we have that the number of edges of each $G_i$ is $\bigO(k)$, hence even though the number of vertices of the graph $G_i$ could be $\omega(k)$, we still can gather all relevant vertices and edges in the memory of a single machine and compute a matching of $G_i$.

\paragraph{Analysis: degree reduction of Stage 1}
Now we briefly sketch the part of the proof that shows that we can use this algorithm, to obtain a residual graph, in which the number of vertices in the cover, that have high degree larger than $\Delta^{0.99}$, is $\bigO(k / \Delta^{0.03})$, w.e.h.p. 

Firstly, let us consider single execution , for each vertex, the proof analyses what is the number of edges that go between this vertex and a single group of vertices (in random partition). By symmetry, it is enough to analyze this for a fixed group, e.g. $Z_{1}$. The main idea is to splits vertices of the graph into two kinds: vertices that have small impact on the number of edges in the residual graph (variance of $Z_{v,1}$ $\bigO(\Delta^{1.4})$) and all remaining vertices. 

For the first kind of vertices, the authors show that the probability of having degree larger than $\Delta^{0.99}$ is $\bigO(\Delta^{-0.03})$, and the analysis does not depend on value of $n$ or $k$. This gives that if there are $k'$ such vertices in the cover, the expected number of such vertices in the cover of residual graph with degree larger than $\Delta^{0.99}$ is $\bigO(k'\Delta^{-0.03})$.

For the second kind of vertices, the authors show that the expected sum of variances of all vertices in some set $S$ is $\bigO( m/\Delta \cdot 2\Delta \cdot \Delta^{0.15} + |S|\Delta^{1.15})$ \footnote{in the original analysis $S$ was a set of all vertices, and $m / \Delta$ is replaced by $n$, as such bound was good enough for their purpose}. If we take $m = \Delta k$ and $S$ to be a set of vertices in the cover, we get that the expected sum of variances of all vertices in the cover is $\bigO(k \Delta^{1.15})$. From here, we have that in expectation, there can be at most $\frac{k}{\Delta^{0.25}}$ vertices that do not qualify as the first kind. 

Therefore, the expected number of vertices in the cover of residual graph with degree larger than $\Delta^{0.99}$ is $\bigO(k'\Delta^{-0.03})+ \bigO(k\Delta^{-0.25}) = \bigO(k \Delta^{-0.03})$. 

\paragraph{Reducing degree with exponentially high probability}
If at the beginning, there are at most $k / \Delta^{0.03}$ vertices of degree at least $\Delta^{0.99}$, then we are done. Otherwise, we have that the number of edges in the graph is at least $k \Delta^{0.96}$. There are two cases to consider. 

If $\Delta < k^{0.1}$, then we can use inequality from \cref{prop:bounded_diff_ineq}, which gives exponentially high probability of obtaining a residual graph with at most $k / \Delta^{0.03}$ vertices of degree at least $\Delta^{0.99}$, in the cover.

If $\Delta > k^{0.1}$, then we can run $k^{0.05}$ instances of the algorithm. A single instance, by Markov inequality, has a constant probability, that the number of vertices in the cover that have degree larger than $\Delta^{0.99}$ is $\bigO(k / \Delta^{0.03})$. Repeating this $k^{0.05}$ times, in parallel, gives an exponentially high probability that in one of those instances we have $\bigO(k / \Delta^{0.03})$. vertices of degree larger than $\Delta^{0.99}$ in the cover of residual graph. Since one of the steps of the algorithm is to subsample the edges of the input graph, we can bound the total space required by all instances by $k^{0.05} \cdot \bigO(k\Delta \cdot \Delta^{-0.85}) \leq \Delta^{0.5} \cdot \bigO(k\Delta \cdot \Delta^{-0.85} = \bigO(k\Delta^{0.65})$, which is smaller than $\bigO(k \Delta^{0.96})$ space required for just storing the edges of the input graph. Therefore, all instances of the algorithm can be run in parallel, without any additional space.

\paragraph{Stage 2}
In the second phase of the algorithm, we sample each edge in the residual graph incident to a vertex with degree larger than $\Delta^{0.999}$ with probability $q = 1 / \Delta^{0.99}$. Since after the first stage, we have at most $\bigO(k / \Delta ^{0.03})$ vertices of high degree, the total number of edges is $\bigO(k\Delta^{0.99}) + \bigO(k / \Delta ^{0.03} \cdot \Delta) = \bigO(k \Delta^{0.99})$. Sampling each edge with probability $q$ gives us a set of edges that has size $\bigO(k)$, w.e.h.p. Thus, it can be gathered in the memory of a single machine, where we can compute a greedy maximal matching of this subsampled graph. 

We compute a greedy maximal matching on the set of obtained edges, add it to the final solution and remove all edges incident to computed matching. By properties of greedy maximal matchings of uniformly random subgraphs we have that for reach vertex, its degree in the residual graph is $\bigO(\Delta^{0.991})$ with probability $1 - \exp(\poly(\Delta))$. This implies that the expected number of high degree vertices in the cover of residual graph is $k \cdot \exp(\poly(\Delta))$.

If $\Delta > k^{0.01}$, by Markov inequality the degree of all vertices in the cover of residual graph is $\bigO(\Delta^{0.991})$, with exponentially high probability . For $\Delta \leq k^{0.01}$, we use that adding / removing a single edge changes the match status of $\bigO(\Delta)$ vertices \cite{maximal_matching_focs}, which means that the number of high degree vertices in residual graph is a $\bigO(\Delta)$-Lipschitz function of $k\Delta$ random variables, which allows us to apply inequality from \cref{prop:bounded_diff_ineq}. This gives us that the number of vertices in the cover of residual graph with degree higher than $\bigO(\Delta^{0.991})$ is $k \cdot \exp(\poly(\Delta)) + \sqrt{k\Delta} \Delta k^{0.01} = \bigO(k^{0.51} \Delta^{1.5} + k \cdot \exp(\poly(\Delta)))$. By assumption $\Delta < k^{0.01}$, this is easily $\bigO(k / \Delta^2)$.

This means, that the remaining number of high degree vertices in the cover of residual graph is so small, that we can gather all edges incident to those vertices in the memory of a single machine, and greedily match them, which eliminates all remaining high degree vertices from the cover.
\end{proof}

Some part of the proof of \cref{lem:single_phase_mm} require that degree of all vertices is small, therefore we have to somehow handle the fact that reducing degree of the vertices in the cover may be not enough. To handle this issue, we use the third observation, that the number of remaining high degree vertices is $\bigO(k)$.

\begin{Claim}\label{c:small_number_high_degree}
  After executing a single phase of algorithm from \cite{maximal_matching_focs}, on a graph $G$ with maximum degree $\Delta$, the number of vertices in the residual graph that have degree larger than $\Delta^{0.999}$ is $\bigO(k)$.
\end{Claim}

\begin{proof}
  By \cref{lem:single_phase_mm}, we have that the degree of vertices in the cover of residual graph is $\bigO(\Delta^{0.999})$, \whp. Therefore, the number of edges in the whole graph is at most $\bigO(k) \cdot \bigO(\Delta^{0.999})$, \whp. This means, that in the residual graph there are at most $\bigO(k) \cdot \bigO(\Delta^{0.999}) / \Omega(\Delta^{0.999}) = \bigO(k)$ vertices of degree $\Omega(\Delta^{0.999})$, \whp.
\end{proof}

Having \cref{lem:single_phase_mm} and \cref{c:small_number_high_degree} is enough to give an algorithm that guarantees that in $\bigO(1)$ rounds the degree of whole graph drops significantly \whp. Firstly, we run a phase of the algorithm from \cref{lem:single_phase_mm}. Then we identify all vertices with degrees $\Omega(\Delta^{0.999})$, and add them to the cover. Since there are $\bigO(k)$ such vertices, the size of the cover is still $\bigO(k)$. Now we run the algorithm on the graph with this extended cover -- by construction of this cover all vertices of degree $\Omega(\Delta^{0.999})$ are already in this cover, and running the algorithm from \cref{lem:single_phase_mm} reduces those degrees to $\bigO(\Delta^{0.999})$. As a result we get a residual graph in which all vertices have degree $\bigO(\Delta^{0.999})$

\paragraph{Remark}
We use a single phase of the algorithm from \cite{maximal_matching_focs} in a black box manner (twice, to get reduction of degree in whole graph), showing that its memory requirements can be expressed as a function of the vertex cover. Similarly as in the paper \cite{maximal_matching_focs}, if we are given larger memory, we can use larger probabilities of sampling, which yields faster algorithms, more precisely for $k \in \Theta(S^{1-\varepsilon})$ the round complexity is $\bigO(\log(1/\varepsilon))$.
\end{proof}

\bibliographystyle{abbrv} 
\bibliography{ref}

\end{document}